\documentclass[manuscript,screen,nonacm]{acmart}

\usepackage{amsmath,amsthm}
\usepackage{algorithm}
\usepackage{algpseudocode}
\usepackage{graphicx}
\usepackage{booktabs}
\usepackage{multirow}
\usepackage{xcolor}
\usepackage{url}

\usepackage{pifont}
\usepackage{hyperref}
\usepackage{cleveref}

\newtheorem{theorem}{Theorem}

\usepackage{array}
\usepackage{tabularx}

\usepackage{pgfplots}
\pgfplotsset{compat=1.18}
\usetikzlibrary{arrows.meta,positioning, calc}

\newcommand{\xmark}{$\times$}
\AtBeginDocument{%
  }

\setcopyright{none}
\renewcommand\footnotetextcopyrightpermission[1]{}

\begin{document}

\title[SAEID: Secure Aggregate Encryption for Multi-Vendor FPGA Cloud Deployment]{Secure Aggregate Encryption with Identity-Based Authentication for Multi-Vendor FPGA Cloud Deployment}

\author{Mukta Debnath}
\email{muktapearl.debnath@gmail.com}
\orcid{0000-0001-5478-5163}
\affiliation{%
  \institution{University of Calcutta}
  \city{Kolkata}
  \country{India}}
  
\author{Krishnendu Guha}
\email{KGuha@ucc.ie}
\affiliation{%
  \institution{University College Cork}
  \state{Cork}
  \country{Ireland}
}

\author{Debasri Saha}
\email{debasri\_cu@yahoo.in}
\affiliation{%
  \institution{University of Calcutta}
  \city{Kolkata}
  \country{India}
}

\author{Amlan Chakrabarti}
\email{acakcs@caluniv.ac.in}
\affiliation{%
  \institution{University of Calcutta}
  \city{Kolkata}
  \country{India}
}

\author{Susmita Sur-Kolay}
\email{ssk@isical.ac.in}
\affiliation{%
  \institution{Indian Statistical Institute}
  \city{Kolkata}
  \country{India}}

\renewcommand{\shortauthors}{Debnath et al.}
\pagenumbering{arabic}

\begin{abstract}

Secure deployment of FPGA bitstreams in heterogeneous multi-vendor cloud
infrastructures requires scalable authorization, authenticated device
access, and bitstream protection throughout the deployment lifecycle.
Existing approaches typically address these functions through separate
mechanisms, increasing coordination and key-management requirements.
This paper presents \emph{SAEID}, a secure FPGA deployment framework that
integrates aggregate authorization, certificate-free identity-based device
authentication, and identity-bound bitstream verification within a
pairing-based cryptographic framework, with symmetric cryptography used for
session binding and AES-256-GCM-based bitstream protection.
Building on AgEID, an aggregate-encryption scheme that enables individual
decryption for authorized FPGA devices, SAEID extends the underlying
framework to heterogeneous multi-vendor deployments by supporting multiple
FPGA vendors and IP providers, capability-aware authorization, and dynamic
device membership. SAEID provides protection against unauthorized access
to future deployments after device revocation and to prior deployments by
newly enrolled devices, while retaining constant-size aggregate ciphertexts
within each vendor domain and individual decryption.
Experimental results demonstrate the practical performance of SAEID, with
identity-based device authentication completing in approximately
4.35~ms and device membership updates requiring 4.95--5.09~$\mu$s for
device addition. The aggregate-encryption component exhibits scalable
behavior with increasing device-set size.
The complete SAEID software decryption path was also validated on a
physical ZC702 Cortex-A9 platform, requiring 191.126~ms.
These results demonstrate scalable aggregate authorization with bounded
authentication and dynamic-membership overhead for secure multi-vendor
FPGA deployment.

\end{abstract}

\begin{CCSXML}
<ccs2012>
   <concept>
       <concept_id>10002978.10002979.10002980</concept_id>
       <concept_desc>Security and privacy~Key management</concept_desc>
       <concept_significance>500</concept_significance>
       </concept>
 </ccs2012>
\end{CCSXML}

\ccsdesc[500]{Security and privacy~Key management}

\keywords{FPGA bitstream protection, Aggregate encryption, Identity-based authentication, Elliptic curve cryptography, Dynamic device membership}

\maketitle

\section{Introduction}

Field-programmable gate arrays (FPGAs) have become shared cloud resources
through the FPGA-as-a-Service (FaaS) paradigm, enabling remote deployment
of accelerator IPs on heterogeneous FPGA infrastructures~\cite{jin2020survey,zeitouni2020multitenant}.
Unlike conventional embedded deployments, cloud-based FPGA environments
involve shared infrastructure managed by cloud service providers (CSPs),
heterogeneous FPGA resources, and multiple tenants or IP providers that
deploy proprietary hardware designs~\cite{zeitouni2021trust,
turan2021trust,ahmed2025multitenant}. This deployment model introduces
security requirements beyond conventional bitstream confidentiality,
including device authentication, device-bound bitstream verification,
scalable authorization, and secure management of dynamically changing
device groups.

Existing FPGA protection mechanisms, including secure boot and encrypted
configuration, primarily protect individual devices
~\cite{trimberger2011authenticated,trimberger2014features,amd_ug1085_zynqmp}.
Certificate-based public-key infrastructures (PKIs) provide device
authentication through digital certificates, but require certificate
management, including storage, validation, and revocation, which can
introduce additional communication, latency, and storage overheads
\cite{singh2019hybrid,hu2019certificate}.
Encryption of FPGA bitfiles is commonly used to prevent unauthorized
access to protected IP, while authentication mechanisms are required to
establish the legitimacy of IP vendors and users
\cite{saha2017fpga}.
However, these mechanisms do not by themselves provide scalable
authorization and authenticated deployment across heterogeneous
multi-vendor FPGA infrastructures.

Aggregate encryption provides scalable authorization by allowing a single
ciphertext to distribute a session key to multiple authorized recipients
while preserving individual decryption~\cite{bgw2005,kac2018}. Our previous
work, AgEID~\cite{debnath2024ageid}, applied this principle to FPGA
bitstream deployment, enabling an IP provider to encrypt a session key once
for an authorized FPGA cluster. However, AgEID assumes a homogeneous
single-vendor environment and primarily addresses confidentiality. Device
authentication, bitstream integrity and device binding, heterogeneous
multi-vendor deployment, and dynamic device membership are outside its
security model.

This paper presents \textbf{SAEID}, a secure FPGA deployment framework that
extends AgEID to heterogeneous multi-vendor cloud infrastructures. SAEID
supports multiple FPGA vendors and IP providers, capability-aware
authorization, dynamic device membership, certificate-free identity-based
device authentication, and identity-bound bitstream verification while
retaining constant-size aggregate ciphertexts within each vendor domain
and individual decryption. The framework combines aggregate
authorization with authenticated deployment sessions and device-bound
bitstream protection while maintaining separate cryptographic domains for
the corresponding security functions. 

\begin{table}[t]
\centering
\caption{Comparison Between AgEID and SAEID}
\label{tab:ageid_vs_saeid}
\begin{tabular}{lcc}
\hline
\textbf{Capability} & \textbf{AgEID} & \textbf{SAEID} \\
\hline
Aggregate Encryption & $\checkmark$ & $\checkmark$ \\
Individual Decryption & $\checkmark$ & $\checkmark$ \\
Multi-Vendor FPGA Deployment & $\times$ & $\checkmark$ \\
Multiple IP Providers & $\times$ & $\checkmark$ \\
Dynamic Device Membership & Partial & $\checkmark$ \\
Capability-Aware Authorization & $\times$ & $\checkmark$ \\
Forward/Backward Security & $\times$ & $\checkmark$ \\
Certificate-Free Device Authentication & $\times$ & $\checkmark$ \\
Identity-Bound Bitstream Verification & $\times$ & $\checkmark$ \\
Authenticated Provider Binding & $\times$ & $\checkmark$ \\
\hline
\end{tabular}
\end{table}

The contributions of this paper are summarized as follows:

\begin{itemize}

\item \textbf{Multi-vendor and multi-provider aggregate authorization.}
We extend aggregate encryption to heterogeneous FPGA clouds using
vendor-specific authorization domains, capability-aware device selection,
and constant-size aggregate ciphertexts within each vendor domain.

\item \textbf{Dynamic membership with forward and backward security.}
We introduce device addition and revocation through constant-cost updates
to the affected vendor-specific authorization component, together with
forward and backward security across membership epochs.

\item \textbf{Certificate-free identity-based device authentication and
device-bound bitstream integrity.}
We integrate pairing-based device authentication and identity-bound
bitstream verification with the aggregate-encryption framework using a
common bilinear-group foundation. The identity domain uses an independent
vendor master secret $s_v$, while aggregate authorization uses
$\gamma_v$, allowing the two functions to share the pairing infrastructure
without sharing cryptographic secrets.

\item \textbf{Security analysis and implementation evaluation.}
We provide security arguments covering confidentiality, device
impersonation and replay resistance, bitstream integrity and device
binding, dynamic membership, and multi-vendor isolation. Experimental
evaluation shows that the identity-based authentication exchange
completes in approximately 4.35~ms, while device addition and revocation
require only 4.95--5.09~$\mu$s and 4.49--5.46~$\mu$s, respectively. The
complete SAEID software path is additionally validated on physical ZC702
ARM hardware.

\end{itemize}

The remainder of this paper is organized as follows. Section~\ref{sec:related}
reviews related work and cryptographic background. Section~\ref{sec:system-threat-model}
defines the system and threat models. Section~\ref{sec:framework_overview}
presents the SAEID framework, followed by security analysis in
Section~\ref{sec:security}, scalability analysis in Section~\ref{sec:scalability},
and experimental evaluation in Section~\ref{sec:eval}.
Section~\ref{sec:discussion} discusses the results and limitations, and
Section~\ref{sec:conclusion} concludes the paper. Detailed correctness
derivations and security proof sketches are provided in
Appendix~\ref{app:proofs}.

\begin{table*}[t]
\centering
\caption{Comparison of Representative FPGA Bitstream Deployment Frameworks}
\label{tab:related}
\resizebox{\textwidth}{!}{%
\begin{tabular}{lcccc}
\toprule
\textbf{Security Capability} &
\textbf{Conventional FPGA Protection} &
\textbf{Broadcast/KAC} &
\textbf{AgEID} &
\textbf{SAEID} \\
\midrule
Constant-Size Ciphertext & \xmark & \xmark & \checkmark & \checkmark \\
Individual Decryption & \checkmark & \checkmark & \checkmark & \checkmark \\
External Trusted Third Party & Often Required & Often Required & \xmark & \xmark \\
Multi-Vendor Deployment & \xmark & Limited & Limited & \checkmark \\
Multiple IP Providers & \xmark & \xmark & \xmark & \checkmark \\
Dynamic Device Membership & \xmark & Partial & Partial & \checkmark \\
Forward/Backward Security & \xmark & Partial & Not Formalized & \checkmark \\
Certificate-Free Device Authentication & \xmark & \xmark & \xmark & \checkmark \\
Identity-Bound Bitstream Verification & Vendor-Specific & \xmark & \xmark & \checkmark \\
\bottomrule
\end{tabular}}
\end{table*}

\section{Background and Related Work}
\label{sec:related}

\subsection{FPGA Bitstream Security and Cloud Deployment}

FPGA bitstream security has traditionally focused on protecting the
confidentiality and integrity of configuration data at the device
level. Commercial FPGA platforms provide hardware-supported mechanisms
for secure configuration and authentication~\cite{amd_ug1085_zynqmp,
trimberger2014features,trimberger2011authenticated}. Prior research
has further investigated bitstream authentication~\cite{drimer2007authentication},
tamper protection~\cite{hoque2019hidden}, fault-injection attacks
~\cite{swierczynski2018bifi}, and secure remote configuration
~\cite{drimer2009secure}. FPGA IP protection has additionally considered
confidentiality, authentication, tamper detection, and protection against
hardware Trojan insertion. Saha and Sur-Kolay~\cite{saha2017fpga}
discuss security challenges in FPGA IP distribution and management,
encrypted bitstream deployment, IP vendor and user authentication, and
partially reconfigurable FPGA systems. These mechanisms provide
important device- and IP-level security guarantees but primarily
consider individual FPGA configurations or IP distribution scenarios.

With the emergence of cloud and multi-tenant FPGA infrastructures,
security requirements extend to shared resources and remote
provisioning. Studies have identified threats involving unauthorized
configuration, information leakage, and attacks arising from
multi-tenant resource sharing~\cite{jin2020survey,zeitouni2020multitenant,
turan2021trust}. Secure cloud-FPGA systems have consequently explored
protected execution and bitstream protection~\cite{hong2018fasten,
kim2019safedb}, while trusted configuration mechanisms have been
proposed for cloud FPGA deployments~\cite{zeitouni2021trust}.
Bag \textit{et al.}~\cite{bag2020} further investigated
cryptographically secure provisioning for multi-tenant FPGA
environments.
FPGA intellectual-property protection has also been studied through
dynamic IP protection and licensing mechanisms
~\cite{guneysu2007dynamic,maes2012payperuse}, as well as protection
across the design-to-bitstream flow~\cite{hutchings2024fpga}. More
recently, multi-tenant cloud FPGA security has been surveyed with
emphasis on security, trust, and privacy challenges
~\cite{ahmed2025multitenant}. 

\subsection{Aggregate Encryption and Dynamic Authorization}

Aggregate encryption addresses the scalability challenges of secure FPGA
deployment by enabling cryptographic material to be distributed to
multiple authorized devices without requiring a separate encryption
operation for each recipient. Broadcast encryption provides a foundation
for such multi-recipient authorization~\cite{bgw2005}, while
key-aggregate cryptography (KAC) enables a compact aggregate key to
provide access to multiple ciphertext classes~\cite{kac2018}. Building
on these concepts, AgEID~\cite{debnath2024ageid} adapted aggregate
encryption to FPGA bitstream deployment, allowing an IP provider to
encrypt an AES key once for an authorized FPGA cluster while
preserving individual device decryption. Bag \textit{et al.}~\cite{bag2020}
also demonstrated the applicability of cryptographic aggregation to
secure multi-tenant FPGA provisioning.
Cloud FPGA environments additionally require authorization policies to
evolve as devices are added or revoked. Dynamic key-aggregate
cryptosystems~\cite{patranabis2015dynamic,patranabis2017provably}
address dynamic membership and secure key updates, while secure FPGA
provisioning and configuration have been explored using cryptographic
key-establishment and re-encryption techniques
~\cite{drimer2009secure,eguro2012fpgas,turan2020proxy}. 

\subsection{Device Authentication and Bitstream Authenticity}
\label{sec:related_auth}

While aggregate encryption provides confidentiality and scalable
authorization, secure FPGA deployment also requires authentication of the
target device and verification of the deployed bitstream. Device
authentication establishes the legitimacy of the FPGA participating in a
deployment, whereas bitstream authenticity ensures that the configuration
originates from an authorized IP provider and has not been modified.
FPGA security architectures therefore incorporate authentication and
integrity mechanisms in addition to bitstream encryption
~\cite{trimberger2011authenticated,trimberger2014features,
amd_ug1085_zynqmp}. Earlier work by Drimer
~\cite{drimer2007authentication} investigated authentication of encrypted
FPGA bitstreams using MAC-based mechanisms. These approaches provide
authentication and integrity within their respective deployment models,
but remain separate from the aggregate-encryption mechanism used for
scalable multi-device authorization.

Hardware-rooted authentication has also been extensively investigated for
FPGA devices using physical unclonable functions (PUFs), including
time-bounded FPGA authentication and PUF-based authentication and
key-exchange protocols
~\cite{majzoobi2011timebounded,rostami2014pufauth}. 
Identity-based
cryptography provides an alternative certificate-free approach in which
public keys are derived from identities~\cite{shamir1984}. 
Pairing-based
identity-based cryptography subsequently enabled identity-based encryption
and signature constructions~\cite{boneh2001ibe,chacheon2003,
hess2002efficient,sakai2003}. 
FPGA implementations of identity-based
encryption and identity-based signature verification have also been
demonstrated using pairing-based and elliptic-curve techniques
~\cite{kerins2004ibe,jarvinen2007selfcertified}.
Identity-based signcryption further combines confidentiality and
authenticity within pairing-based frameworks
~\cite{malonelee2002signcryption,libert2003signcryption,
boyen2003signcryption}, while Chatterjee \textit{et al.}
~\cite{chatterjee2017pufibe} explored pairing-based identity mechanisms
for hardware authentication.

\subsection{Research Gap}

Existing FPGA security mechanisms address important aspects of bitstream
protection, scalable authorization, and device authentication, but do not
jointly address the requirements of heterogeneous cloud deployment.
Vendor-supported secure-boot mechanisms primarily provide
device-specific bitstream protection, while aggregate-encryption
frameworks such as AgEID~\cite{debnath2024ageid} reduce the overhead of
authorizing multiple devices without providing certificate-free device
authentication or identity-bound bitstream verification. Conventional
authentication mechanisms may additionally require certificate and key
management infrastructure. Consequently, there remains a need for a
framework that supports scalable aggregate authorization,
Certificate-Free Identity-Based Device Authentication, identity-bound bitstream
verification, heterogeneous multi-vendor deployment, multiple IP
providers, and dynamic device membership. 
Table~\ref{tab:related} summarizes the positioning of SAEID with respect
to FPGA bitstream deployment approaches.

\begin{figure}[t]
\centering
\begin{tikzpicture}[
node distance=6mm and 8mm,
box/.style={
    draw,
    rounded corners,
    align=center,
    minimum height=0.85cm,
    font=\scriptsize
},
arr/.style={-{Latex[length=2mm]}, thick}
]

\node[box, fill=purple!10, minimum width=2.3cm] (ia1)
    {Identity Authority\\$\mathcal{IA}_{v_1}$: $s_{v_1}$};

\node[box, fill=orange!12, minimum width=2.3cm, below=of ia1] (fv1)
    {FPGA Vendor $v_1$\\Setup/KeyGen};

\node[box, fill=purple!10, minimum width=2.3cm, below=1.4cm of fv1] (ia2)
    {Identity Authority\\$\mathcal{IA}_{v_2}$: $s_{v_2}$};

\node[box, fill=orange!12, minimum width=2.3cm, below=of ia2] (fv2)
    {FPGA Vendor $v_2$\\Setup/KeyGen};

\node[box, fill=gray!12, minimum width=2.6cm,
      right=1.7cm of fv1, yshift=-1.1cm] (csp)
    {partially trusted CSP\\Cluster formation,\\message relay};

\node[box, fill=blue!8, minimum width=2.5cm,
      right=1.6cm of csp] (ipp)
    {IP Provider $p$\\Authentication /\\Bitstream Release};

\node[box, fill=green!10, minimum width=1.7cm,
      above right=0.1cm and 1.2cm of csp] (d1)
    {FPGA$_{v_1,1}$\\Authenticated};

\node[box, fill=green!10, minimum width=1.7cm,
      below right=0.1cm and 1.2cm of csp] (d2)
    {FPGA$_{v_2,1}$\\Authenticated};

\draw[arr] (ia1) -- (fv1)
    node[midway,right,font=\tiny]{$s_{v_1}$};

\draw[arr] (ia2) -- (fv2)
    node[midway,right,font=\tiny]{$s_{v_2}$};

\draw[arr] (fv1) -- (d1)
    node[midway,above,font=\tiny]
    {$d_{v,i},\mathrm{SK}_{\mathrm{ID}_{v,i}}$};

\draw[arr] (fv2) -- (d2)
    node[midway,below,font=\tiny]
    {$d_{v,i},\mathrm{SK}_{\mathrm{ID}_{v,i}}$};

\draw[arr] (fv1) -- (csp)
    node[midway,above,font=\tiny]
    {$PK_v,P_{\mathrm{pub},v}$};

\draw[arr] (fv2) -- (csp)
    node[midway,below,font=\tiny]
    {$PK_v,P_{\mathrm{pub},v}$};

\draw[arr] (ia1.east) to[bend left=18]
    node[midway,above,font=\tiny]
    {$\mathrm{SK}_{\mathrm{ID}_{p,v_1}}$} (ipp.north west);

\draw[arr] (ia2.east) to[bend right=18]
    node[midway,below,font=\tiny]
    {$\mathrm{SK}_{\mathrm{ID}_{p,v_2}}$} (ipp.south west);

\draw[arr] (csp) -- (ipp)
    node[midway,above,font=\tiny]{relay};

\draw[arr] (d1) -- (csp)
    node[midway,above,font=\tiny]{Auth. proof};

\draw[arr] (d2) -- (csp)
    node[midway,below,font=\tiny]{Auth. proof};

\draw[arr] (ipp) -- (csp)
    node[midway,below,font=\tiny]{Challenge, ciphertext};

\end{tikzpicture}%
\caption{SAEID system architecture. Each vendor maintains a separate
cryptographic domain and provisions its FPGA devices with individual
decryption keys and vendor-scoped identity keys. Registered IP providers
obtain corresponding identity credentials. The CSP manages
resources and relays protocol messages without access to vendor, provider,
or device secret keys.}
\Description{SAEID system architecture diagram.}
\label{fig:architecture}
\end{figure}

\section{System Model and Threat Model}
\label{sec:system-threat-model}

SAEID considers a cloud-based FPGA deployment environment involving FPGA
vendors, IP providers, a cloud service provider (CSP), and FPGA devices,
consistent with heterogeneous and multi-tenant FPGA-cloud deployments
\cite{zeitouni2021trust,turan2021trust,ahmed2025multitenant,
skhiri2019fpga}. Each FPGA vendor operates an identity authority
$\mathcal{IA}_v$ for its own device and provider registrations. SAEID
further considers multiple FPGA vendors and IP providers, with each vendor
maintaining a separate cryptographic domain.

\subsection{System Entities}

As shown in~\autoref{fig:architecture}, SAEID comprises five classes of
entities, reflecting the separation of hardware ownership, IP ownership,
and cloud infrastructure commonly found in FPGA-cloud deployment
models~\cite{zeitouni2021trust,turan2021trust}.

\begin{itemize}

\item \textbf{Identity Authorities ($\mathcal{IA}_v$).}
Each FPGA vendor $v\in\mathcal{V}$ operates an identity authority
$\mathcal{IA}_v$ for registering its devices and IP providers. Thus,
$\mathcal{IA}_v$ is a vendor-operated identity-provisioning function
rather than an independent entity. It maintains the vendor-specific
master identity secret $s_v$ and public parameter $P_{\mathrm{pub},v}$,
and derives identity-based private keys
$\mathrm{SK}_{\mathrm{ID}_{v,i}}$ for devices and
$\mathrm{SK}_{\mathrm{ID}_{p}^{(v)}}$ for registered providers.

\item \textbf{FPGA Vendors ($\mathcal{V}$).}
Vendor $v$ manufactures and provisions its FPGA devices and maintains its
vendor-specific cryptographic domain. Each device
$\mathrm{FPGA}_{v,i}$ is provisioned with its individual decryption key
$d_{v,i}$ and identity private key
$\mathrm{SK}_{\mathrm{ID}_{v,i}}$. The vendor publishes the corresponding
aggregate-encryption and identity public parameters.

\item \textbf{IP Providers ($\mathcal{P}$).}
Provider $p$ develops accelerator designs and prepares
architecture-specific bitstream variants. For deployment in vendor domain
$v$, it obtains the vendor-scoped identity credential
$\mathrm{SK}_{\mathrm{ID}_{p}^{(v)}}$, selects eligible devices, generates
a fresh 256-bit AES key $m$, and invokes SAEID to authorize its recovery by
the selected devices.

\item \textbf{Cloud Service Provider (CSP).}
The CSP manages the shared FPGA infrastructure, including device inventory,
resource allocation, scheduling, storage, and message forwarding. It is
partially trusted: it is authorized to access public system parameters and
to perform resource-management and message-relay functions, but does not
receive vendor master secrets, identity-authority secrets, provider private
credentials, device private keys, session keys, or deployment AES keys. The
CSP may store, reorder, delay, or relay deployment messages and ciphertexts.

\item \textbf{FPGA Devices.}
Each $\mathrm{FPGA}_{v,i}$ has a unique identity
$\mathrm{ID}_{v,i}$, individual decryption key $d_{v,i}$, and identity key
$\mathrm{SK}_{\mathrm{ID}_{v,i}}$. The identity key supports device
authentication, while $d_{v,i}$ enables recovery of the authorized AES key
$m$. The device accepts a bitstream only after the required authentication,
authorization, and integrity checks succeed.

\end{itemize}

\subsection{Capability-Aware Deployment Model}
\label{sec:capability_model}

Capability-based access control has been investigated for distributed and
cloud environments to provide scalable and fine-grained resource
authorization~\cite{anggorojati2018capability}. In FPGA-cloud
deployments, resource and device selection must additionally account for
heterogeneous FPGA capabilities and available resources
\cite{skhiri2019fpga,cong2016heterogeneous}. SAEID therefore selects
FPGA devices according to their deployment capabilities rather than
assuming a homogeneous device pool.
Let
\begin{equation}
S_c =
\left\{
\mathrm{FPGA}_{v,i}
\mid
\mathrm{Capability}(\mathrm{FPGA}_{v,i})=c
\right\},
\label{eq:capability_cluster}
\end{equation}
denote the devices satisfying capability class $c$, where a capability
class captures requirements such as FPGA-family compatibility, supported
accelerator features, available logic, memory, DSP resources, and bitstream
compatibility.

For vendor $v$, the eligible subset is
\begin{equation}
S_{c,v}
=
\left\{
\mathrm{FPGA}_{v,i}\in S_c
\right\},
\label{eq:vendor_capability_subset}
\end{equation}
and the participating vendor set is
\begin{equation}
\mathcal{V}_{S_c}
=
\left\{
v\in\mathcal{V}
\mid
S_{c,v}\neq\emptyset
\right\}.
\label{eq:participating_vendors}
\end{equation}

A capability cluster may therefore span multiple vendor domains. Capability
membership determines eligibility only; cryptographic authorization
remains vendor- and provider-specific.

\subsection{Multi-Vendor and Multi-Provider Deployment}
\label{sec:multi_vendor_model}

A selected capability cluster may contain devices from multiple vendors,
reflecting heterogeneous FPGA-cloud infrastructures
\cite{turan2021trust,ahmed2025multitenant}. For each participating
vendor $v$, SAEID maintains a separate provisioning and cryptographic
context and constructs the corresponding vendor-specific authorization
component. Thus, cross-vendor resource pooling does not require sharing
device private keys or merging vendor cryptographic domains.

The system model also allows multiple IP providers to deploy independent
applications over the shared FPGA infrastructure. Each provider
deployment is associated with its own authorization state, session
context, and protected bitstream payload, while vendor-issued device
credentials remain unchanged. Consequently, authorization or session
state
associated with one deployment does not by itself grant access to another
provider's protected deployment.

\begin{figure}[t]
\centering
\begin{tikzpicture}[
    node distance=7mm and 10mm,
    box/.style={
        draw, rounded corners, align=center,
        minimum height=0.8cm,
        minimum width=2.5cm,
        font=\scriptsize
    },
    participant/.style={box, fill=green!10},
    untrusted/.style={box, fill=gray!15},
    attacker/.style={
        box,
        fill=red!10,
        minimum width=2.8cm
    },
    arr/.style={
        -{Latex[length=2mm]},
        thick
    },
    attack/.style={
        -{Latex[length=2mm]},
        dashed,
        thick
    },
    label/.style={
        font=\tiny,
        align=center,
        fill=white,
        inner sep=1pt
    }
]

\node[participant] (device)
    {FPGA Device\\Protected Device Secrets};

\node[participant, right=18mm of device] (provider)
    {Legitimate IP Provider\\Cryptographic Participant};

\node[untrusted, below=12mm of $(device)!0.5!(provider)$] (csp)
    {Cloud Service Provider\\\textbf{Partially trusted Infrastructure}};

\draw[arr]
    (provider) --
    node[label, above, pos=0.3]
    {authentication,\;bitstream,\;ciphertext}
    (csp);

\draw[arr]
    (csp) --
    node[label, above]
    {relay}
    (device);

\node[attacker, below=15mm of csp] (adv)
    {Adversary $\mathcal{A}$\\PPT};

\draw[attack]
    (adv) --
    node[label, right]
    {control / observation}
    (csp);

\draw[attack]
    (adv) to[bend left=32]
    node[label, left]
    {eavesdrop\\replay\\modify\\inject}
    (device);

\draw[attack]
    (adv) to[bend right=26]
    node[label, right]
    {impersonate\\substitute}
    (provider);

\node[
    font=\tiny,
    align=center,
    text width=3.0cm,
    below=5mm of adv
]
{May compromise a subset of FPGA devices\\
and obtain their local cryptographic material.};

\end{tikzpicture}

\caption{Threat model of SAEID. The CSP is partially trusted for resource
management and message relay but has no access to vendor, identity-authority,
provider, or device private secrets. The adversary may observe or manipulate
deployment communications, replay or modify messages, attempt device
impersonation or bitstream substitution, and compromise a subset of FPGA
devices, while master secrets of uncompromised domains remain protected.}
\label{fig:threat_model}
\end{figure}

\subsection{Threat Model and Security Objectives}
\label{sec:threat}

SAEID considers a probabilistic polynomial-time (PPT) adversary
$\mathcal{A}$ that may observe or manipulate communications among IP
providers, the CSP, and FPGA devices, illustrated in~\autoref{fig:threat_model},. The adversary may intercept, replay,
modify, reorder, delay, or inject deployment messages and may obtain
public parameters, deployment metadata, and stored ciphertexts.

The adversary may observe or manipulate deployment communications and
attempt unauthorized device access, impersonation, replay, message
modification, or bitstream substitution without possessing valid device
credentials. The adversary cannot obtain the master encryption secret
$\gamma_v$ or identity secret $s_v$ of an honest vendor, or the private
credentials of legitimately registered devices and IP providers. Each
vendor-operated identity authority $\mathcal{IA}_v$ is assumed to securely
protect $s_v$ and provision identity keys only to legitimately registered
devices and IP providers. The CSP is partially trusted for resource
management and message relay but does not possess vendor master secrets,
private credentials, or device private keys.

The model addresses confidentiality, authorization, authentication,
integrity, replay resistance, and related security properties. Availability
and denial-of-service attacks based on message dropping, persistent delay,
or resource exhaustion are outside the scope.

Under this model, SAEID provides the following security objectives:

\begin{itemize}

\item \textbf{Confidentiality and Unauthorized-Access Resistance.}
Only devices authorized for a deployment can recover its AES key $m$ and
decrypt the corresponding bitstream. The CSP and other unauthorized
entities cannot recover $m$ from the authorization metadata and aggregate
ciphertext alone.

\item \textbf{Dynamic-Membership Security.}
A revoked device cannot decrypt ciphertexts generated after revocation,
while a newly enrolled device cannot decrypt ciphertexts generated before
enrollment. After initial provisioning, membership changes affect only the
corresponding vendor-specific aggregate-authorization component and do not
require re-provisioning unaffected devices. Provider credential revocation
is outside the present dynamic-membership model; provider credentials are
assumed valid for their registration lifetime unless the corresponding
vendor identity domain is re-provisioned.

\item \textbf{Domain Isolation.}
Vendor cryptographic domains and provider deployment contexts remain
independent; secret material obtained from one domain does not by itself
provide authorization for another independent domain.

\item \textbf{Authentication and Replay Resistance.}
An adversary without the identity private key of
$\mathrm{FPGA}_{v,i}$ cannot produce an accepted authentication proof for
that device except with negligible probability. Fresh authentication
challenges prevent reuse of previously observed authentication transcripts
in a new deployment session.

\item \textbf{Authenticated Session Binding.}
The authenticated provider--device exchange binds the aggregate ciphertext
to its corresponding deployment context, including the authenticated
identities and freshness values. A ciphertext or authorization message
from a different deployment context must be detected before recovery of
the AES key.

\item \textbf{Bitstream Integrity and Device Binding.}
The identity-bound provider signature authenticates the registered provider
and binds the deployment to the intended device, while AES-256-GCM provides
authenticated protection of the bitstream payload. Consequently,
unauthorized or modified deployment artifacts can be detected before FPGA
configuration. Provider revocation remains outside the present
dynamic-membership mechanism.

\end{itemize}

Physical extraction, invasive attacks, and compromise of an honest
vendor's or identity authority's master secrets are outside the model.
Likewise, compromise of an honest provider's protected credential is
excluded. Therefore, the authentication guarantee provides cryptographic
impersonation resistance but does not cover physical extraction and
duplication of a device's identity key.

\section{Proposed SAEID Framework}
\label{sec:framework_overview}

SAEID extends AgEID from a homogeneous, single-vendor FPGA deployment
model to a heterogeneous multi-vendor and multi-provider environment.
The underlying aggregate-encryption mechanism and individual-decryption
property of AgEID are retained. SAEID adds four principal capabilities:
(i) capability-aware device selection across heterogeneous FPGA platforms;
(ii) vendor-specific authorization domains supporting multiple IP providers;
(iii) dynamic device membership with forward and backward security; and
(iv) certificate-free identity-based device authentication and
identity-bound bitstream verification.

The distinction between AgEID and SAEID is therefore primarily at the
deployment and security layers rather than in the underlying
aggregate-encryption primitive. In SAEID, eligible devices are first
organized into vendor-specific authorization domains, while
identity-based authentication establishes the identity of the target
device before deployment. A fresh AES deployment key $m$ is protected through
the corresponding vendor-specific aggregate-encryption components, and
the architecture-specific bitstream is cryptographically bound to the
authenticated device and provider. Device addition and revocation update
only the affected vendor-specific authorization component, while the
identity credentials of unaffected devices remain unchanged. Table~\ref{tab:ageid_vs_saeid} summarizes
the principal extensions introduced by SAEID.

The following subsections describe the SAEID construction, system
parameters, algorithms, authentication protocol, aggregate authorization,
dynamic membership, and bitstream verification mechanisms. The dynamic membership updates are illustrated in~\autoref{fig:membership},
the deployment workflow in Fig.~\ref{fig:enc-workflow}, and the
authentication protocol in Fig.~\ref{fig:auth-protocol}.

\subsection{From AgEID to SAEID: A Deployment Scenario}
\label{subsec:ageid-saeid-example}

To illustrate the evolution from the original single-vendor deployment
model to the complete SAEID framework, consider a deployment involving
four FPGA devices from vendor $v_1$. The same deployment is then extended
to a second vendor, followed by device membership change, authenticated
deployment, and device-bound bitstream verification.

\paragraph{Single-vendor deployment.}
Vendor $v_1$ operates $n_{v_1}=4$ provisioned FPGA devices,
$\{\mathrm{FPGA}_{v_1,1},\ldots,\mathrm{FPGA}_{v_1,4}\}$, under an
independent aggregate-encryption domain. The vendor maintains the public
group elements required by the construction and a master secret
$\gamma_{v_1}$, and each device $i$ receives an individual decryption key
$d_{v_1,i}=\gamma_{v_1}g_{v_1,i}$.

Suppose that a deployment is intended for the authorized subset
\[
S_{c,v_1}=\{1,2,4\},
\]
while $\mathrm{FPGA}_{v_1,3}$ remains outside the selected deployment
cluster. The corresponding aggregate-authorization component is

\begin{equation}
K_{S_{c,v_1},v_1}
=
g_{v_1,4}+g_{v_1,3}+g_{v_1,1}.
\end{equation}

The provider generates a fresh group-valued deployment secret
$m_G\in G_T$ and derives the AES-256 deployment key as

\begin{equation}
K_{\mathrm{AES}}
=
\mathrm{SHA256}
\left(
\operatorname{Serialize}_{G_T}(m_G)
\right).
\end{equation}

The vendor-specific aggregate ciphertext is then

\begin{equation}
C_{v_1}
=
(c_1,c_2,c_3),
\end{equation}

where

\begin{equation}
c_1=tg_{v_1,1},
\qquad
c_2=t\left(PK_{v_1}+K_{S_{c,v_1},v_1}\right),
\end{equation}

and

\begin{equation}
c_3
=
m_G\cdot
\hat e
\left(
g_{v_1,4},
t g_{v_1,1}
\right),
\end{equation}

for fresh $t\in\mathbb{Z}_p^*$. Each authorized device uses its
individual decryption key together with the corresponding
board-specific aggregate term

\begin{equation}
b_{i,S_{c,v_1}}
=
\sum_{\substack{j\in S_{c,v_1}\\j\neq i}}
g_{v_1,n_{v_1}+1-j+i}
\end{equation}

to recover $m_G$ and subsequently derive $K_{\mathrm{AES}}$.

For this example, the required device-specific terms are shown in
Table~\ref{tab:example}.

\begin{table}[t]
	\centering
	\caption{Device-specific aggregate-decryption terms for
	$S_{c,v_1}=\{1,2,4\}$, $n_{v_1}=4$.}
	\label{tab:example}
	\begin{tabular}{@{}cccc@{}}
		\toprule
		\textbf{Device} &
		\textbf{Authorized} &
		\textbf{Decryption key} &
		\textbf{Aggregate term} \\
		\midrule
		$\mathrm{FPGA}_{v_1,1}$ & Yes &
		$d_{v_1,1}$ & $g_{v_1,4}+g_{v_1,2}$ \\
		$\mathrm{FPGA}_{v_1,2}$ & Yes &
		$d_{v_1,2}$ & $g_{v_1,6}+g_{v_1,3}$ \\
		$\mathrm{FPGA}_{v_1,3}$ & No &
		$d_{v_1,3}$ & --- \\
		$\mathrm{FPGA}_{v_1,4}$ & Yes &
		$d_{v_1,4}$ & $g_{v_1,8}+g_{v_1,7}$ \\
		\bottomrule
	\end{tabular}
\end{table}

The example also illustrates that aggregate ciphertext size does not
depend on the number of authorized devices. Although the auxiliary
decryption term differs for each device, the same vendor-specific
ciphertext is used by all authorized devices, while an unauthorized
device does not satisfy the corresponding aggregate-decryption relation.

\paragraph{Extension to a second vendor.}
The deployment is next extended to include
$\mathrm{FPGA}_{v_2,1}$ from a second vendor $v_2$. Vendor $v_2$ maintains
an independent aggregate-encryption domain with its own public parameters,
device keys, and master secret $\gamma_{v_2}$. The overall capability
cluster becomes

\[
S_c=S_{c,v_1}\cup S_{c,v_2},
\qquad
S_{c,v_2}=\{1\}.
\]

The authorization state is therefore represented by the set of
vendor-specific components

\begin{equation}
K_{S_c}
=
\left\{
K_{S_{c,v_1},v_1},
K_{S_{c,v_2},v_2}
\right\}.
\end{equation}

Each vendor domain is processed independently. No aggregate-encryption
secret or device decryption key is shared between $v_1$ and $v_2$, and
one aggregate-encryption operation is performed for each participating
vendor domain. Hence, adding a new vendor increases the number of
vendor-domain encryption operations without changing the ciphertext
structure within an individual domain.

\paragraph{Dynamic membership.}
Suppose $\mathrm{FPGA}_{v_1,3}$ is subsequently added to the deployment.
Only the authorization component associated with vendor $v_1$ is updated:

\begin{equation}
K_{S_{c,v_1}\cup\{3\},v_1}
=
K_{S_{c,v_1},v_1}
+
g_{v_1,2},
\end{equation}

where $g_{v_1,2}$ is the contribution of the newly added device for
$n_{v_1}=4$. The component associated with vendor $v_2$ remains unchanged.
The updated authorization state forms a new membership epoch and is used
for subsequently generated ciphertexts. Ciphertexts generated under the
previous epoch retain their original authorization state.

Consequently, a device revoked before a new epoch cannot recover
deployments generated under that new epoch, whereas a device enrolled in a
later epoch cannot recover ciphertexts generated under an earlier epoch.
Membership changes therefore affect authorization state without requiring
regeneration of the identity credentials of unaffected devices.

\paragraph{Authenticated and device-bound deployment.}
For a subsequent deployment, the selected target device first authenticates
to the registered IP provider using its vendor-scoped identity credential.
The authentication exchange establishes the shared session value
$K_{v_1,1,p}$ and the session key

\begin{equation}
K_{\mathrm{sess},v_1,1}
=
H_3
\left(
K_{v_1,1,p}
\parallel
ID_{v_1,1}
\parallel
ID_p^{(v_1)}
\parallel
N_{v_1,1}
\parallel
\tau_{v_1,1}
\right).
\end{equation}

The provider uses this session key to authenticate the
vendor-specific aggregate ciphertext and its deployment context through
the session-binding tag. The target device verifies the session-binding
tag before performing aggregate decryption.

After successful authorization, the device recovers
$\widehat{m}_G$ and derives

\begin{equation}
\widehat{K}_{\mathrm{AES}}
=
\mathrm{SHA256}
\left(
\operatorname{Serialize}_{G_T}(\widehat{m}_G)
\right).
\end{equation}

For an authorized device, $\widehat{m}_G=m_G$ and hence
$\widehat{K}_{\mathrm{AES}}=K_{\mathrm{AES}}$. The device then verifies
the provider's identity-bound signature for its assigned
architecture-specific bitstream variant and checks the corresponding
AES-256-GCM authentication tag. Only after all authentication,
authorization, signature, and integrity checks succeed is the bitstream
decrypted and accepted for FPGA configuration.

The same sequence is independently executed for devices belonging to
vendor $v_2$ under that vendor's aggregate-encryption and identity
domains. Thus, credentials and authorization material associated with one
vendor domain cannot be directly reused within another vendor domain.

\subsection{Capability-Aware Clustering and Heterogeneous Bitstream Deployment}
\label{sec:cap_heterogeneous}

SAEID separates deployment compatibility from vendor identity by organizing
eligible FPGA devices into capability-aware clusters. For a capability
requirement $c$, define

\begin{equation}
S_c =
\left\{
\mathrm{FPGA}_{v,i} :
\mathrm{Capability}(\mathrm{FPGA}_{v,i}) = c
\right\},
\label{eq:capability_cluster_def}
\end{equation}

where $c$ captures the hardware and deployment requirements of the target
accelerator, including FPGA-family compatibility, supported features,
available logic, memory, DSP resources, and other application-specific
constraints. Devices from different vendors may therefore belong to the
same capability cluster; capability determines eligibility, while vendor
identity determines the provisioning and cryptographic authorization
domain.

For each participating vendor $v$, let

\begin{equation}
S_{c,v}
=
\left\{
\mathrm{FPGA}_{v,i}\in S_c
\right\},
\qquad
I_{c,v}
=
\left\{
i:\mathrm{FPGA}_{v,i}\in S_{c,v}
\right\},
\end{equation}

where $I_{c,v}\subseteq\{1,\ldots,n_v\}$ is the corresponding device-index
set. The participating vendor domains are

\begin{equation}
\mathcal{V}_{S_c}
=
\left\{
v\in\mathcal{V}:S_{c,v}\neq\emptyset
\right\}.
\label{eq:vendor_subset}
\end{equation}

Each vendor $v$ establishes an independent aggregate-encryption domain
over a bilinear group $G$ of prime order $p$ with generator $g$ and
pairing $\hat{e}:G\times G\rightarrow G_T$. The vendor selects a setup
exponent $\alpha_v\in\mathbb{Z}_p^{*}$ and generates the public group
elements
\begin{equation}
g_{v,k}=g^{\alpha_v^k},
\qquad
k\in\{1,\ldots,2n_v\},
\quad k\neq n_v+1,
\label{eq:public_group_elements}
\end{equation}
where $n_v$ is the number of provisioned devices in vendor domain $v$.
The vendor independently selects the aggregate-encryption secret
$\gamma_v\in\mathbb{Z}_p^{*}$ and publishes the corresponding public
aggregate-encryption parameter
\begin{equation}
PK_v=\gamma_v g,
\label{eq:aggregate_public_parameter}
\end{equation}
where $g$ is the generator of $G$. The remaining public system
parameters, including the bilinear group description and the public
elements $\{g_{v,k}\}$, are denoted collectively by $\mathit{param}_v$.
The vendor retains $\gamma_v$ as the aggregate-encryption secret.
Each registered
device $\mathrm{FPGA}_{v,i}$ is provisioned with its individual decryption
key
\begin{equation}
d_{v,i}=\gamma_v g_{v,i}.
\label{eq:device_decryption_key}
\end{equation}

Each vendor independently constructs the aggregate-authorization component
for its selected devices following the pairing-based key-aggregate
construction used in AgEID~\cite{debnath2024ageid}:
\begin{equation}
K_{S_{c,v},v}
=
\sum_{j\in I_{c,v}}
g_{v,n_v+1-j}.
\label{eq:aggregate_authorization}
\end{equation}

The resulting multi-vendor authorization structure is

\begin{equation}
\mathcal{K}_{S_c}
=
\left\{
K_{S_{c,v},v}:v\in\mathcal{V}_{S_c}
\right\}.
\label{eq:multivendor_aggregate_key}
\end{equation}

Thus, a deployment may span multiple vendor domains without merging
their cryptographic domains. Capability membership is an eligibility
criterion, not a cryptographic security boundary.

Capability compatibility does not imply bitstream compatibility. Capability
compatibility determines whether a device is eligible for the deployment,
whereas architectural compatibility determines which compiled bitstream
variant can be executed by that device. Let

\begin{equation}
\mathcal{A}(S_c)
=
\left\{
\mathrm{Arch}(v,i):
\mathrm{FPGA}_{v,i}\in S_c
\right\},
\qquad
\mu=|\mathcal{A}(S_c)|.
\label{eq:architecture_classes}
\end{equation}

denote the distinct target architecture classes. The IP provider prepares
one compiled bitstream variant $\mathrm{Bit}^{(k)}$ for each
$k\in\{1,\ldots,\mu\}$, with devices in the same architecture class using
the corresponding variant. Hence, bitstream preparation scales with
$\mu$, rather than directly with the number of devices.

The IP provider protects all architecture-specific bitstream variants
using a common deployment AES key $m$. The key $m$ is then protected for
the selected devices through the multi-vendor authorization structure
$\mathcal{K}_{S_c}$, with one aggregate ciphertext generated for each
participating vendor domain. The corresponding encryption and individual
recovery procedures are given in Section~\ref{sec:aggregate_workflow}.

\begin{figure}[t]
\centering
\begin{tikzpicture}[
node distance=6mm and 9mm,
box/.style={
    draw,
    rounded corners,
    align=center,
    minimum height=0.8cm,
    font=\scriptsize
},
arr/.style={-{Latex[length=2mm]}, thick}
]

\node[box, fill=gray!10, minimum width=2.7cm] (before)
    {$S_{c,v}=\{1,2,3\}$\\$K_{S_{c,v},v}$};

\node[box, fill=green!10, right=1.2cm of before, minimum width=3.0cm]
    (add)
    {Add device $i'=4$\\
    $K' = K+\Delta^{\mathrm{add}}_{v,4}$};

\node[box, fill=gray!10, right=1.2cm of add, minimum width=2.7cm]
    (after1)
    {$S_{c,v}=\{1,2,3,4\}$\\
    updated $K_{S_{c,v},v}$};

\draw[arr] (before) -- (add);
\draw[arr] (add) -- (after1);

\node[box, fill=gray!10, below=9mm of before, minimum width=2.7cm]
    (before2)
    {$S_{c,v}=\{1,2,3,4\}$\\
    $K_{S_{c,v},v}$};

\node[box, fill=red!8, right=1.2cm of before2, minimum width=3.0cm]
    (revoke)
    {Revoke device $i^\ast=2$\\
    $K' = K-\Delta^{\mathrm{rev}}_{v,2}$};

\node[box, fill=gray!10, right=1.2cm of revoke, minimum width=2.7cm]
    (after2)
    {$S_{c,v}=\{1,3,4\}$\\
    updated $K_{S_{c,v},v}$};

\draw[arr] (before2) -- (revoke);
\draw[arr] (revoke) -- (after2);

\end{tikzpicture}
\caption{Dynamic membership update within a vendor-specific authorization
domain. Device addition or revocation modifies only the affected
aggregate-authorization component; identity credentials of unaffected
devices and authorization components of other vendors remain unchanged.}
\label{fig:membership}
\end{figure}

\subsection{Multi-Vendor, Multi-Provider Deployment and Dynamic Membership}
\label{sec:multi_vendor_dynamic}

The vendor-specific authorization components defined for a
capability-aware cluster form the authorization structure for a
multi-vendor deployment:
\begin{equation}
\mathcal{K}_{S_c}
=
\left\{
K_{S_{c,v},v} :
v\in\mathcal{V}_{S_c}
\right\}.
\label{eq:deployment_auth_structure}
\end{equation}

Each component is constructed independently within its vendor domain.
Thus, devices from different vendors may participate in the same
deployment without sharing vendor master secrets, device private keys, or
identity-authority secrets.

SAEID also supports deployments from multiple IP providers. For a provider
$p$, the selected devices and corresponding vendor domains define the
authorization structure for that deployment. The provider independently
establishes the deployment session and generates a fresh AES key $m$.
The resulting deployment is therefore associated with its own
authorization state, session context, and deployment key, without
requiring changes to the vendor-issued device credentials.

\paragraph{Dynamic Membership.}

As illustrated in Fig.~\ref{fig:membership}, SAEID supports device addition
and revocation without reconstructing the complete multi-vendor
authorization structure.  
For an already provisioned
device $i'$ added to $S_{c,v}$,

\begin{equation}
K_{S_{c,v}\cup\{i'\},v}
=
K_{S_{c,v},v}
+
\Delta^{\mathrm{add}}_{v,i'},
\label{eq:add_device}
\end{equation}

where

\begin{equation}
\Delta^{\mathrm{add}}_{v,i'}
=
g_{v,n_v+1-i'}.
\end{equation}

Similarly, revocation of device $i^\ast$ is performed as

\begin{equation}
K_{S_{c,v}\setminus\{i^\ast\},v}
=
K_{S_{c,v},v}
-
\Delta^{\mathrm{rev}}_{v,i^\ast},
\label{eq:revoke_device}
\end{equation}

with

\begin{equation}
\Delta^{\mathrm{rev}}_{v,i^\ast}
=
g_{v,n_v+1-i^\ast}.
\end{equation}

Hence, membership changes update only the affected vendor-specific
aggregate component and have $O(1)$ aggregate-update cost. This cost
assumes that an added device has already completed vendor provisioning and
received its decryption and identity keys; provisioning and
identity-key generation are outside the membership-update measurement.
Revocation likewise does not require re-provisioning unaffected devices.

Each aggregate authorization state is associated with a membership epoch
$e$, denoted by $K_{S_{c,v}}^{(e)}$. A membership change creates a new
epoch, and subsequently generated ciphertexts use the updated
authorization state. Thus, devices added in epoch $e+1$ cannot decrypt
ciphertexts from epoch $e$, while devices revoked before epoch $e+1$
cannot decrypt newly generated ciphertexts under epoch $e+1$.
Previously issued ciphertexts are not retroactively re-encrypted.

Identity keys authenticate devices, whereas the aggregate authorization
state determines deployment access. Hence, membership changes do not
require modification of the identity keys of unaffected devices.

\begin{figure}[t]
\centering
\begin{tikzpicture}[
  node distance=5mm and 7mm,
  box/.style={
    draw,
    rounded corners,
    align=center,
    minimum height=0.8cm,
    font=\scriptsize
  },
  arr/.style={-{Latex[length=2mm]}, thick}
]

\node[box, fill=orange!12] (setup)
    {Vendor identity setup\\
     $\mathrm{ExtractIdentityKey}$};

\node[box, fill=purple!10, right=of setup] (chal)
    {AuthenticateDevice\\
     Challenge $(N,\tau)$};

\node[box, fill=purple!10, right=of chal] (resp)
    {Device response\\
     $(U,V)$};

\node[box, fill=purple!10, below=of resp] (verify)
    {VerifyAuthentication\\
     $\Rightarrow K_{\mathrm{sess}}$};

\node[box, fill=blue!8, left=of verify] (extract)
    {Extract($S_{c,v}$)\\
     $\to K_{S_{c,v},v}$};

\node[box, fill=blue!8, left=of extract] (encrypt)
    {SAEID.Encrypt\\
     $\to C_v=(c_1,c_2,c_3)$};

\node[box, fill=gray!12, below=of encrypt] (bind)
    {Session binding\\
     $T_v=\mathrm{MAC}_{K_{\mathrm{sess}}}(D_v)$};

\node[box, fill=gray!12, below=of bind] (upload)
    {CSP relay/storage\\
     $\{C_v,T_v,\mathrm{Ct}^{(k)},\mathrm{Tag}_{\mathrm{GCM}}^{(k)},
     \sigma_{p,v,i}^{(k)}\}$};

\node[box, fill=green!10, right=of upload] (verifytag)
    {Verify $T_v$\\
     with $K_{\mathrm{sess}}$};

\node[box, fill=green!10, right=of verifytag] (decrypt)
    {SAEID.Decrypt\\
     $\to \hat m$};

\node[box, fill=green!10, right=of decrypt] (verifybs)
    {Verify signature\\
     \& GCM tag};

\node[box, fill=green!10, right=of verifybs] (bitdec)
    {Bitstream\\
     decrypt \& execute};

\draw[arr] (setup) -- (chal);
\draw[arr] (chal) -- (resp);
\draw[arr] (resp) -- (verify)
    node[midway,right,font=\tiny]{$(U,V)$};

\draw[arr] (verify) -- (extract)
    node[midway,above,font=\tiny]{on accept};

\draw[arr] (extract) -- (encrypt)
    node[midway,above,font=\tiny]{$K_{S_{c,v},v}$};

\draw[arr] (encrypt) -- (bind)
    node[midway,right,font=\tiny]{$C_v$};

\draw[arr] (bind) -- (upload)
    node[midway,right,font=\tiny]{$T_v$};

\draw[arr] (upload) -- (verifytag)
    node[midway,above,font=\tiny]{encrypted artifacts};

\draw[arr] (verifytag) -- (decrypt)
    node[midway,above,font=\tiny]{on accept};

\draw[arr] (decrypt) -- (verifybs)
    node[midway,above,font=\tiny]{$\hat m$};

\draw[arr] (verifybs) -- (bitdec)
    node[midway,above,font=\tiny]{on accept};

\end{tikzpicture}%
\caption{End-to-end authentication, aggregate-authorization, and secure
bitstream deployment workflow. Device authentication establishes the
session key $K_{\mathrm{sess}}$, which is used to authenticate the
vendor-specific aggregate ciphertext through the session-binding tag
$T_v$. The CSP relays only encrypted deployment artifacts and
protocol metadata. At the target FPGA, the session-binding tag is verified
before aggregate content-key recovery, followed by identity-bound
signature verification and AES-GCM tag verification before bitstream
decryption and configuration.}
\label{fig:enc-workflow}
\end{figure}

\subsection{Aggregate Encryption and Secure Bitstream Execution}
\label{sec:aggregate_workflow}

SAEID retains the central deployment principle of AgEID: aggregate
encryption protects a group-valued deployment secret from which the
symmetric key used for bitstream protection is derived, rather than
protecting the bitstream payload itself. For a capability-aware deployment
cluster $S_c$ with $\mu$ architecture-specific variants
$\{\mathrm{Bit}^{(k)}\}_{k=1}^{\mu}$, the IP provider samples a fresh
group-valued deployment secret
$m_G\in G_T$. The corresponding AES-256 deployment key is deterministically
derived from the canonical byte serialization of $m_G$ as
\begin{equation}
K_{\mathrm{AES}}
=
\mathrm{SHA256}\!\left(
\operatorname{Serialize}_{G_T}(m_G)
\right),
\label{eq:aes-key-derivation}
\end{equation}
where $K_{\mathrm{AES}}\in\{0,1\}^{256}$ is used as the key for
AES-256-GCM. The session key $K_{\mathrm{sess},v,i}$ is separate from
$K_{\mathrm{AES}}$ and is used only to bind the authenticated
provider--device exchange. \autoref{fig:enc-workflow} shows the end-to-end
authentication, aggregate-authorization, and secure bitstream deployment
workflow.

For each participating vendor $v\in\mathcal{V}_{S_c}$, the
vendor-specific aggregate-authorization component $K_{S_{c,v},v}$
protects the group-valued deployment secret $m_G$. Let
$t\xleftarrow{R}\mathbb{Z}_p$ be a fresh encryption scalar. The aggregate
ciphertext is constructed using the bilinear pairing
$\hat{e}:G\times G\rightarrow G_T$ inherited from the AgEID
construction~\cite{debnath2024ageid}. The ciphertext is
\begin{equation}
C_v=(c_1,c_2,c_3),
\end{equation}
where
\begin{align}
c_1 &= tg,\\
c_2 &= t(PK_v+K_{S_{c,v},v}),\\
c_3 &= m_G\cdot
\hat e(g_{v,n_v},tg_{v,1}).
\end{align}
Here, $PK_v$ denotes the public parameter of the vendor's
aggregate-encryption domain. Hence, one aggregate-encryption operation is
performed per participating vendor domain rather than per authorized FPGA
device.

For each architecture-specific variant, the provider encrypts and
authenticates the bitstream using the common AES deployment key
$K_{\mathrm{AES}}$:
\begin{equation}
\left(
\mathrm{Ct}^{(k)},
\mathrm{Tag}_{\mathrm{GCM}}^{(k)}
\right)
\leftarrow
\mathrm{AES\text{-}256\text{-}GCM.Encrypt}
\left(
K_{\mathrm{AES}},IV^{(k)},\mathrm{Bit}^{(k)},\mathrm{Hdr}^{(k)}
\right).
\label{eq:bitstream_aead}
\end{equation}
Here, $IV^{(k)}$ is the GCM initialization vector (nonce), which is
unique for each encryption performed under the same key
$K_{\mathrm{AES}}$. The value $\mathrm{Hdr}^{(k)}$ denotes the associated
data authenticated by AES-GCM. The provider additionally generates an
identity-bound signature that binds the deployment to the authenticated
provider, target device, and architecture-specific variant.

Following successful device authentication, the provider and device
establish the session key $K_{\mathrm{sess},v,i}$ for the authenticated
deployment exchange. Its establishment is described in
Section~\ref{sec:auth_bitstream}. The resulting session key is used to
authenticate the deployment metadata and bind the aggregate ciphertext
to the authenticated provider--device session.

\paragraph{Device-Side Key Recovery.}
For a device $\mathrm{FPGA}_{v,i}\in S_{c,v}$, define
\begin{equation}
b_{i,S_{c,v}}=
\sum_{\substack{j\in I_{c,v}\\j\neq i}}
g_{v,n_v+1-j+i},
\label{eq:device_recovery_term}
\end{equation}
where $I_{c,v}$ is the corresponding set of device indices. Using its
individual decryption key $d_{v,i}$, the device recovers the
group-valued deployment secret as
\begin{equation}
\widehat{m}_G
=
c_3\cdot
\frac{
\hat{e}\left(d_{v,i}+b_{i,S_{c,v}},c_1\right)
}{
\hat{e}\left(g_{v,i},c_2\right)
}.
\label{eq:saeid_decryption}
\end{equation}
For an authorized device $i\in I_{c,v}$, the bilinear cancellation
property gives
\begin{equation}
\widehat{m}_G=m_G.
\end{equation}
The device then derives the AES deployment key from the recovered
group element:
\begin{equation}
\widehat{K}_{\mathrm{AES}}
=
\mathrm{SHA256}\!\left(
\operatorname{Serialize}_{G_T}(\widehat{m}_G)
\right).
\label{eq:recovered-aes-key}
\end{equation}
Consequently, for an authorized device,
$\widehat{K}_{\mathrm{AES}}=K_{\mathrm{AES}}$, whereas devices outside
the authorized subset do not satisfy the corresponding aggregate
decryption condition. The algebraic correctness of
Eq.~\eqref{eq:saeid_decryption} is provided in
Appendix~\ref{app:proofs}.

The complete deployment procedure is summarized in
Algorithm~\ref{alg:aggregate_execution} and illustrated in
\autoref{fig:enc-workflow}. The CSP relays and stores only encrypted
artifacts and protocol metadata. At the target FPGA, the device first
verifies the session-binding tag, recovers the group-valued deployment
secret, derives the AES deployment key, verifies the identity-bound
signature and AES-GCM authentication tag, and only then decrypts and
configures the assigned bitstream variant.

\begin{algorithm}[t]
\caption{Aggregate Encryption and Secure Bitstream Execution}
\label{alg:aggregate_execution}
\begin{algorithmic}[1]
\Require Capability cluster $S_c$, architecture-specific variants
$\{\mathrm{Bit}^{(k)}\}_{k=1}^{\mu}$

\Ensure Secure deployment of the assigned bitstream variant

\State Generate a fresh group-valued deployment secret
$m_G\gets_R G_T$
\Comment{Deployment-secret generation}

\State Derive
$K_{\mathrm{AES}}\gets
\mathrm{SHA256}(\operatorname{Serialize}_{G_T}(m_G))$
\Comment{AES-256 deployment-key derivation}

\ForAll{$k=1,\ldots,\mu$}
    \State Compute
    $\left(\mathrm{Ct}^{(k)},\mathrm{Tag}_{\mathrm{GCM}}^{(k)}\right)
    \gets
    \mathrm{AES\text{-}256\text{-}GCM.Encrypt}
    \left(K_{\mathrm{AES}},IV^{(k)},\mathrm{Bit}^{(k)},\mathrm{Hdr}^{(k)}\right)$
    \Comment{Bitstream encryption and authentication}
\EndFor

\ForAll{$v\in\mathcal{V}_{S_c}$}

    \State Select fresh $t\xleftarrow{R}\mathbb{Z}_p$

    \State Compute
    $C_v=(c_1,c_2,c_3)$ using $K_{S_{c,v},v}$ and $m_G$
    \Comment{Aggregate encryption of $m_G$}

    \ForAll{$i\in I_{c,v}$}
        \State Establish the authenticated session key
        $K_{\mathrm{sess},v,i}$
        \Comment{Provider--device authentication}

        \State Form
        $D_{v,i}\gets
        \mathrm{ID}_{v,i}\parallel
        \mathrm{ID}_{p}^{(v)}\parallel
        N_{v,i}\parallel
        \tau_{v,i}\parallel
        v\parallel c\parallel C_v$

        \State Compute
        $T_{v,i}\gets
        \mathrm{MAC}_{K_{\mathrm{sess},v,i}}(D_{v,i})$
        \Comment{Session binding}
    \EndFor

    \State Relay
    $\{C_v,T_{v,i}\}_{i\in I_{c,v}}$ and the assigned protected
    variants through the CSP
    \Comment{CSP performs relay/storage only}

    \ForAll{$i\in I_{c,v}$}
        \State Verify $T_{v,i}$ using $K_{\mathrm{sess},v,i}$
        \If{verification fails}
            \State Reject deployment
        \EndIf

        \State Recover $\widehat{m}_G$ from $C_v$ using individual decryption
        \Comment{Aggregate decryption}

        \State Derive
        $\widehat{K}_{\mathrm{AES}}\gets
        \mathrm{SHA256}(\operatorname{Serialize}_{G_T}(\widehat{m}_G))$
        \Comment{AES-256 key recovery}

        \State Verify the identity-bound provider signature
        \Comment{Provider/deployment authentication}

        \State Verify $\mathrm{Tag}_{\mathrm{GCM}}^{(k)}$
        \Comment{Bitstream integrity verification}

        \If{any verification fails}
            \State Reject deployment
        \Else
            \State Decrypt and configure the assigned $\mathrm{Ct}^{(k)}$
            using $\widehat{K}_{\mathrm{AES}}$
        \EndIf
    \EndFor

\EndFor

\end{algorithmic}
\end{algorithm}

\begin{figure}[t]
\centering
\begin{tikzpicture}[
  node distance=8mm and 14mm,
  actor/.style={font=\scriptsize\bfseries},
  msg/.style={-{Latex[length=2mm]}, thick},
  lifeline/.style={dashed, gray}
]
\node[actor] (dev) at (0,0) {FPGA Device $v,i$};
\node[actor] (ipp) at (6,0) {IP Provider};
\draw[lifeline] (dev) -- ++(0,-6.2);
\draw[lifeline] (ipp) -- ++(0,-6.2);

\draw[msg] (ipp.south) ++(0,-0.6) -- node[above,font=\tiny]{1. Challenge $(N,\tau)$} ++(-6,0) node[left]{};
\node[font=\tiny] at (3,-1.0) {$N \in_R \{0,1\}^k,\ \tau = \mathrm{timestamp}$};

\node[font=\tiny] at (6,-4.0) {IP Provider verifies the identity-based proof};
\node[font=\tiny] at (0,-2.8) {$V=(r+h)\mathrm{SK}_{\mathrm{ID}_{v,i}}$};

\draw[msg] (0,-3.3) -- node[above,font=\tiny]{2. Response $(U,V)$} (6,-3.3);

\draw[msg] (6,-5.0) -- node[above,font=\tiny]{3. Verification result} (0,-5.0);
\node[font=\tiny] at (3,-5.6)
{Accept iff proof verification succeeds and $\tau$ is fresh};

\draw[msg] (6,-6.0) -- node[above,font=\tiny]{4. Session-bound authorization $(C_v,T_v)$} (0,-6.0);
\end{tikzpicture}%
\caption{Authentication protocol message flow: challenge generation,
identity-based device response, verification, and session-bound vendor
authorization. Successful authentication establishes the session key
$K_{\mathrm{sess}}$, which is subsequently used to authenticate the
vendor-specific aggregate ciphertext $C_v$ through the session-binding tag
$T_v$. Detailed authentication equations are given in
Section~\ref{sec:auth_bitstream}.}
\label{fig:auth-protocol}
\end{figure}

\subsection{Integrated Identity-Based Authentication and Bitstream Integrity}
\label{sec:auth_bitstream}

SAEID addresses two distinct security requirements in FPGA-cloud
deployment. Aggregate encryption provides scalable authorization of the
AES deployment key $m$, whereas identity-based cryptography provides
certificate-free authentication of FPGA devices and registered IP
providers. SAEID integrates these two mechanisms within separate
vendor-specific cryptographic domains: the aggregate-encryption secret
$\gamma_v$ protects deployment-key authorization, while the identity
secret $s_v$ supports device authentication and provider
authentication/signatures. Thus, identity-based authentication complements
rather than replaces the aggregate-encryption mechanism inherited from
AgEID.

For each vendor $v$, the identity framework operates over a cyclic group
$\mathbb{G}$ of prime order $p$ with generator $g$ and bilinear pairing
\[
\hat{e}:\mathbb{G}\times\mathbb{G}\rightarrow\mathbb{G}_T.
\]
The vendor operates the identity authority $\mathcal{IA}_v$, which
maintains the master identity secret
$s_v\in\mathbb{Z}_p^{*}$ and publishes the corresponding public identity
parameter
\begin{equation}
P_{\mathrm{pub},v}=s_vg.
\label{eq:identity_public}
\end{equation}
The identity domain is independent of the aggregate-encryption domain,
whose vendor-specific secret is $\gamma_v$.

During vendor provisioning, each FPGA device $\mathrm{FPGA}_{v,i}$ is
assigned a unique identity
\begin{equation}
\mathrm{ID}_{v,i}
=
\texttt{DEV:}\parallel
\mathrm{VendorID}_{v}\parallel
\mathrm{DeviceIndex}_{i}.
\label{eq:device_identity}
\end{equation}
Its identity point and corresponding private key are
\begin{equation}
Q_{v,i}=H_1(\mathrm{ID}_{v,i}),
\qquad
\mathrm{SK}_{\mathrm{ID}_{v,i}}
=
s_vQ_{v,i},
\label{eq:device_identity_key}
\end{equation}
where
\[
H_1:\{0,1\}^{*}\rightarrow\mathbb{G}
\]
maps an identity string to a point in $\mathbb{G}$. The resulting
identity private key is provisioned to the corresponding FPGA and is
used for identity-based authentication.

Similarly, each IP provider registered under vendor $v$ is assigned a
vendor-scoped identity from a separate namespace:
\begin{equation}
\mathrm{ID}_{p}^{(v)}
=
\texttt{PROV:}\parallel
\mathrm{VendorID}_{v}\parallel
\mathrm{ProviderID}_{p}.
\label{eq:provider_identity}
\end{equation}
The corresponding identity point and private key are
\begin{equation}
Q_{p}^{(v)}
=
H_1(\mathrm{ID}_{p}^{(v)}),
\qquad
\mathrm{SK}_{\mathrm{ID}_{p}}^{(v)}
=
s_vQ_{p}^{(v)}.
\label{eq:provider_identity_key}
\end{equation}

Thus, device and provider identities are derived within the same
vendor-specific identity domain but use distinct namespaces. This allows
SAEID to authenticate both the target FPGA device and the registered IP
provider without introducing a separate certificate-management
infrastructure, while preserving the separation of identity-based
authentication from aggregate authorization.

\paragraph{Device Authentication.}

The provider initiates authentication with a fresh nonce and timestamp,
\begin{equation}
N_{v,i}\xleftarrow{R}\{0,1\}^{k},
\qquad
\tau_{v,i}=\mathrm{Timestamp}.
\label{eq:auth_challenge}
\end{equation}

The device samples a random scalar
$r\xleftarrow{R}\mathbb{Z}_p$ and computes the randomized identity
point $U$, the challenge scalar $h$, and the response $V$ as
\begin{align}
U &= rQ_{v,i}, \label{eq:auth_U}\\
h &= H_2(
\mathrm{ID}_{v,i}\parallel N_{v,i}\parallel\tau_{v,i}\parallel U),
\label{eq:auth_h}\\
V &= (r+h)\mathrm{SK}_{\mathrm{ID}_{v,i}}.
\label{eq:auth_V}
\end{align}
Here, $U$ is the randomized proof component and $V$ is the corresponding
identity-based response, while $h$ is the verifier challenge derived from
the device identity, freshness values, and $U$.

The provider accepts the proof only if $\tau_{v,i}$ is fresh and
\begin{equation}
\hat e(g,V)
\stackrel{?}{=}
\hat e
\left(
P_{\mathrm{pub},v},
U+hQ_{v,i}
\right).
\label{eq:auth_verify}
\end{equation}

For an honestly generated proof, correctness follows from bilinearity and
$\mathrm{SK}_{\mathrm{ID}_{v,i}}=s_vQ_{v,i}$. Following successful
authentication, the device and provider derive the shared pairing value
using their respective vendor-scoped identity credentials:
\begin{equation}
K_{v,i,p}
=
\hat e\!\left(
\mathrm{SK}_{\mathrm{ID}_{v,i}},Q^{(v)}_{p}
\right)
=
\hat e\!\left(
Q_{v,i},\mathrm{SK}^{(v)}_{\mathrm{ID}_{p}}
\right).
\label{eq:shared_pairing}
\end{equation}
The session key is then derived as
\begin{equation}
K_{\mathrm{sess},v,i}
=
H_3\!\left(
K_{v,i,p}
\parallel ID_{v,i}
\parallel ID^{(v)}_{p}
\parallel N_{v,i}
\parallel \tau_{v,i}
\right).
\label{eq:session_key}
\end{equation}

The session key is used for deployment-session binding and remains
distinct from the AES key $m$ protected by aggregate encryption. The
authentication exchange is illustrated in~\autoref{fig:auth-protocol}. 

\paragraph{Identity-Bound Bitstream Integrity.}

For each architecture-specific encrypted variant $\mathrm{Ct}^{(k)}$, the
registered provider generates an identity-bound signature using its
vendor-scoped identity private key
$\mathrm{SK}_{\mathrm{ID}_{p}}^{(v)}$. Following the same
pairing-based Fiat--Shamir construction used for device authentication,
the signature is computed over the provider identity, target device
identity, encrypted bitstream variant, and associated release metadata.
Thus, the signature binds the protected variant to its registered provider
and intended FPGA device.

At the target FPGA, the signature is verified using the corresponding
vendor public identity parameter $P_{\mathrm{pub},v}$ and the registered
provider identity. The device then verifies the AES-GCM authentication tag
using the recovered AES key $\hat m$. Deployment is accepted only when
both provider authentication and encrypted-payload integrity are
successfully verified.

The detailed signature construction and verification equations, together
with the corresponding correctness and unforgeability arguments, are
provided in Appendix~\ref{app:proofs}.

\begin{algorithm}[t]
\caption{Identity Authentication and Bitstream Integrity Verification}
\label{alg:identity_integrity}
\begin{algorithmic}[1]
\Require Device $\mathrm{FPGA}_{v,i}$, provider $p$, encrypted variant
$\mathrm{Ct}^{(k)}$, recovered AES key $\hat m$
\Ensure Authenticated deployment or rejection

\State Provider selects fresh $(N_{v,i},\tau_{v,i})$
\State Provider sends $(N_{v,i},\tau_{v,i})$ to the device
\State Device samples $r\xleftarrow{R}\mathbb{Z}_p$ and computes
$U\gets rQ_{v,i}$
\State $h\gets
H_2(\mathrm{ID}_{v,i}\parallel N_{v,i}\parallel\tau_{v,i}\parallel U)$
\State $V\gets(r+h)\mathrm{SK}_{\mathrm{ID}_{v,i}}$
\State Device sends $(U,V)$ to the provider
\State Provider verifies freshness and checks
\[
\hat e(g,V)
\stackrel{?}{=}
\hat e(P_{\mathrm{pub},v},U+hQ_{v,i})
\]

\If{verification fails}
    \State Reject authentication
\EndIf

\State Provider generates the identity-bound signature
$\sigma_{p,v,i}^{(k)}$
\State Provider sends $\sigma_{p,v,i}^{(k)}$ with the encrypted variant
\State Device verifies $\sigma_{p,v,i}^{(k)}$

\If{signature verification fails}
    \State Reject bitstream
\EndIf

\State Device verifies the AES-GCM authentication tag using $\hat m$

\If{tag verification fails}
    \State Reject bitstream
\Else
    \State Accept and decrypt the assigned variant
\EndIf

\end{algorithmic}
\end{algorithm}

\section{Security Analysis}
\label{sec:security}

SAEID considers a probabilistic polynomial-time (PPT) adversary
$\mathcal{A}$ that may control the cloud service provider (CSP), observe,
replay, modify, reorder, or inject deployment messages, and compromise a
subset of FPGA devices. A compromised device exposes its locally stored decryption and identity keys. The adversary may therefore
attempt unauthorized recovery of the AES key $m$, device impersonation,
bitstream substitution, cross-vendor or cross-provider privilege
escalation, and exploitation of membership changes. Vendor master
secrets, identity-authority secrets, and protected secrets stored in
uncompromised devices are outside the adversary's capabilities, as
defined in Section~\ref{sec:threat}.

SAEID separates the aggregate-encryption domain from the identity domain.
The former is protected by the vendor-specific secret $\gamma_v$, while
device authentication and identity-bound provider signatures use the
identity secret $s_v$. The AES-GCM key $m$ is freshly generated for each
deployment, and the session key $K_{\mathrm{sess},v,i}$ is derived from
the authenticated device--provider exchange. The security proof sketches and the corresponding advantage bound are provided in Appendix~\ref{app:proofs}.

\paragraph{AES-Key Confidentiality and Collusion Resistance.}

SAEID applies aggregate encryption to the group-valued deployment secret
$m_G\in G_T$, from which the AES-256 deployment key
$K_{\mathrm{AES}}$ is derived. For a vendor-specific deployment set
$S_{c,v}$, only devices satisfying the corresponding aggregate-
authorization relation can recover $m_G$ and consequently derive
$K_{\mathrm{AES}}$. An unauthorized device, the CSP, or a coalition of
compromised devices that does not satisfy this relation cannot recover
$m_G$ from the aggregate ciphertext except with negligible probability
under the BDHE assumption underlying the aggregate-encryption
construction.

The confidentiality of the protected bitstream variants then follows
from the confidentiality of $K_{\mathrm{AES}}$ and the security of
AES-256-GCM. Importantly, possession of an identity credential does not
provide aggregate-decryption capability because the identity secret
$s_v$ and the aggregate-encryption secret $\gamma_v$ are distinct.

SAEID retains the aggregate-encryption algebra of AgEID; its extensions
change device selection, vendor-domain organization, authentication, and
deployment authorization without modifying the underlying
confidentiality mechanism. The complete reduction is provided in
Appendix~\ref{app:proofs}.

\paragraph{Dynamic Device Membership.}

SAEID supports device addition and revocation by updating only the
affected vendor-specific aggregate-authorization component according to
the addition and revocation procedures defined earlier in the construction.
The $O(1)$ complexity applies specifically to this aggregate-key update;
provisioning and installation of long-term credentials for a newly added
device are separate operations.

Each membership state is associated with an epoch
$E_0,E_1,\ldots$. A membership change creates a new epoch, and devices
added or revoked by that change are respectively included in or excluded
from the updated authorization state for subsequently generated
ciphertexts. Previously generated ciphertexts remain governed by their
original authorization state.

Hence, a revoked device cannot recover newly generated ciphertexts using
its previous authorization, and a newly enrolled device cannot recover
earlier ciphertexts solely because of its subsequent enrollment. These
forward- and backward-security properties apply to newly generated
authorization material and do not revoke plaintext keys or bitstreams
already legitimately obtained by a device.

Identity credentials of unaffected devices remain unchanged. Provider
credential revocation is outside the present dynamic-membership model,
which covers FPGA-device membership only. The membership epoch versions
the authorization state across membership changes and is distinct from
the nonce- and timestamp-based freshness mechanism used for authentication
replay resistance. The formal dynamic-membership argument is provided in
Appendix~\ref{app:proofs}.

\paragraph{Device Authentication and Replay Resistance.}

SAEID authenticates each target FPGA using a randomized identity-based
Fiat--Shamir proof. The provider supplies a fresh nonce $N_{v,i}$ and
timestamp $\tau_{v,i}$, and these values are incorporated into the
authentication challenge. Following successful verification, both
participants derive the session key as defined in
Eq.~\eqref{eq:session_key}, which binds the vendor-specific aggregate
ciphertext to the authenticated deployment context.

An adversary without the target device's identity private key cannot
produce an accepted proof for a fresh challenge except with negligible
probability under the stated identity-based Fiat--Shamir assumptions.
Previously recorded authentication transcripts are prevented from being
reused through fresh nonces and timestamps, together with rejection of
previously accepted freshness values.

The per-device freshness state is assumed to be maintained in trusted
persistent storage and preserved across ordinary power cycles and FPGA
reconfiguration. Loss or rollback of this trusted state is outside the
present threat model. Thus, the replay-resistance guarantee is explicitly
conditional on preservation of the trusted freshness state.

\paragraph{Bitstream Integrity and Device Binding.}

SAEID combines an identity-bound provider signature with AES-GCM
authenticated encryption. The signature challenge incorporates the
provider identity, target-device identity, encrypted payload, and release
metadata. Consequently, the signature is bound to a particular provider,
encrypted payload, and target FPGA rather than to a generic bitstream
release.

The target device accepts a protected variant only when both the
identity-bound signature and the AES-GCM authentication tag verify.
Modification of the encrypted payload invalidates the authenticated
encryption check, while redirection of a valid signature to another
device changes the identity-bound challenge and invalidates the signature.
Therefore, under identity-based signature unforgeability and AES-GCM
authenticity, modified, substituted, forged, or incorrectly targeted
bitstreams are rejected except with negligible probability.

\paragraph{Multi-Vendor and Multi-Provider Isolation.}

Each vendor maintains an independent aggregate-encryption domain and
vendor-scoped identity domain. A device credential or authorization
component from vendor $v$ therefore does not directly satisfy the
authentication or authorization conditions of another vendor $v'$. For
each provider deployment, the aggregate-authorization components and
session-binding keys are maintained within the corresponding vendor and
deployment contexts. Consequently, compromise of one provider deployment
does not by itself provide the cryptographic material required to access
another independent provider deployment or vendor domain, assuming the
corresponding vendor and provider credentials remain protected.

\paragraph{Protection Against a Partially Trusted CSP}

The CSP performs resource management, deployment storage, and message
relay, but does not receive vendor master secrets, identity-authority
secrets, device private keys, the AES key $m$, or
$K_{\mathrm{sess},v,i}$. Consequently, cryptographic confidentiality and
deployment authenticity do not depend on the CSP possessing these secret
values. The aggregate-encryption mechanism prevents recovery of $m$ by
the CSP, while provider authentication and AES-GCM verification allow the
target device to detect unauthorized or modified deployment artifacts.
Under the stated cryptographic and freshness assumptions, the CSP cannot
recover $m$ or cause a modified deployment artifact to pass the device-side
verification checks except with negligible probability.

\paragraph{Secure Composition.}

The SAEID security properties compose through distinct cryptographic
roles: $\gamma_v$ protects aggregate authorization of $m$; $s_v$ supports
device authentication and provider signatures; $K_{\mathrm{sess},v,i}$
provides session binding; and $m$ protects the bitstream through
AES-GCM. The identity-based authentication and provider-signature
mechanisms share the vendor identity domain $s_v$, but use separate
protocol roles and identity namespaces. Under the stated key-separation,
domain-separation, cryptographic, and freshness assumptions, an
end-to-end violation must correspond to a failure of at least one of the
underlying confidentiality, authentication, signature, authenticated-
encryption, or freshness mechanisms. The detailed game-based reductions
and advantage bounds are given in Appendix~\ref{app:proofs}.

\section{Scalability Analysis}
\label{sec:scalability}

Let $|S_c|$ denote the number of FPGA devices in a deployment cluster,
$|S_{c,v}|$ the number of selected devices belonging to vendor $v$,
$V$ the number of participating FPGA vendors, $P$ the number of active
IP providers, and $\mu$ the number of architecture-specific bitstream
variants required by a deployment. SAEID separates these factors across
different stages of deployment. Aggregate-authorization preparation
depends on the number of selected devices within each vendor domain,
aggregate content-key encryption depends on the number of participating
vendor domains, and heterogeneous bitstream preparation depends on the
number of distinct target architectures.

\subsection{Complexity of Core Operations}
\label{sec:complexity}

Vendor provisioning and long-term device-key generation are enrollment
operations and require $O(n_v)$ operations for $n_v$ devices in vendor
domain $v$. For a selected deployment subset $S_{c,v}$, aggregate-
authorization preparation requires $O(|S_{c,v}|)$ group operations,
after which aggregate encryption of the AES key $m$ requires $O(1)$
cryptographic operations per vendor domain and $O(V)$ for a deployment
spanning $V$ vendors. Device-side recovery is $O(1)$ when the
authorization-dependent decryption term is precomputed; computing this
term from the deployment set at runtime requires
$O(|S_{c,v}|)$ group additions. Device authentication and session binding
use a fixed number of cryptographic operations per device session and are
therefore $O(1)$ with respect to the deployment population.

Dynamic membership updates only the affected vendor-specific
aggregate-authorization component, following
Eqs.~\eqref{eq:add_device} and~\eqref{eq:revoke_device}.
Hence, device addition and revocation require only a local update to the
corresponding authorization state.

Thus, device addition and revocation require $O(1)$ aggregate-key update
work with respect to the deployment population. This bound excludes
device provisioning and long-term credential installation; unaffected
devices require no re-provisioning and previously generated ciphertexts
need not be re-encrypted.

Heterogeneous deployment introduces a separate factor. If $\mu$ distinct
target architectures occur in the selected cluster, the provider prepares
and authenticated-encrypts $\mu$ architecture-specific bitstream variants,
giving $O(\mu)$ variant-processing instances. The cost of compiling each
variant remains architecture- and application-dependent and is separate
from aggregate authorization.~\autoref{tab:end_to_end_complexity} shows the complexity of SAEID's deployment modules. 

\begin{table}[t]
\centering
\caption{End-to-End Asymptotic Complexity of SAEID Deployment}
\label{tab:end_to_end_complexity}
\footnotesize
\begin{tabular}{p{0.58\columnwidth}c}
\toprule
\textbf{Deployment Component} & \textbf{Complexity} \\
\midrule
Aggregate-authorisation preparation
& $O(|S_{c,v}|)$ \\
Aggregate AES-key encryption
& $O(1)$ per vendor; $O(V)$ overall \\
Device authentication
& $O(1)$ per device session \\
Session-binding computation
& $O(1)$ per device session \\
Bitstream variant preparation
& $O(\mu)$ \\
Provider signature generation
& $O(\mu|S_v|)$ \\
Device-side signature verification
& $O(1)$ per deployed variant \\
Aggregate-authorisation membership update
& $O(1)$ \\
\bottomrule
\end{tabular}
\end{table}

These bounds separate device-set-dependent authorization preparation from
per-session cryptographic processing and architecture-dependent bitstream
preparation. The implementation results are consistent with these
relationships: aggregate-key extraction increases with the authorized
device-set size, online AES-key encryption remains approximately
independent of that size, and membership-update cost remains approximately
constant across the evaluated cluster sizes.

\subsection{Multi-Vendor and Multi-Provider Scaling}
\label{sec:enc-scaling}

For a deployment spanning $V$ participating vendor domains, SAEID generates
one aggregate ciphertext for the AES key $m$ per vendor domain. Hence, the
online aggregate AES-key encryption workload is

\begin{equation}
O(V).
\label{eq:vendor_scalability}
\end{equation}

Once the corresponding aggregate-authorization components have been
prepared, this online cost is independent of the number of authorized
devices within each vendor domain.

For $P$ provider deployments, let $V_p$ denote the number of vendor
domains participating in provider $p$'s deployment. The cumulative
aggregate AES-key encryption workload is therefore

\begin{equation}
O\left(\sum_{p=1}^{P} V_p\right),
\label{eq:provider_scalability}
\end{equation}

which becomes $O(PV)$ when every provider deploys across all $V$ vendor
domains. This expression denotes cumulative cryptographic workload across
the provider deployments and does not represent wall-clock latency under
concurrent execution.~\autoref{tab:deployment_ops} shows the operation counts for various deployment configurations in SAEID. 

\begin{table}[t]
\centering
\caption{Operation Counts for Heterogeneous FPGA Deployments}
\label{tab:deployment_ops}
\footnotesize
\begin{tabular}{p{0.48\columnwidth}cc}
\toprule
\textbf{Deployment Configuration} &
\textbf{Bitstream Variants} &
\textbf{Aggregate AES-Key Encryption Operations} \\
\midrule
Single vendor, single architecture ($\mu=1$)
& 1 & 1 \\
Single vendor, $\mu$ architectures
& $\mu$ & 1 \\
$V$ vendors, $\mu$ architectures
& $\mu$ & $V$ \\
\bottomrule
\end{tabular}
\end{table}

The multi-vendor and multi-provider evaluation provides empirical
evidence consistent with these scaling relationships. Aggregate
encryption workload increases approximately linearly with the number of
participating vendor domains, while provider deployments maintain
separate authorization state. These results characterize cumulative
workload scaling and should not be interpreted as measurements of
end-to-end wall-clock latency for concurrent deployments.

\subsection{Communication and Deployment Scaling}

For each participating vendor, the aggregate ciphertext
$C_v=(c_1,c_2,c_3)$ has fixed size independent of $|S_{c,v}|$.
Hence, aggregate-authorization ciphertext communication is $O(V)$ for
$V$ participating vendor domains.
Authentication and session metadata contribute fixed-size overhead per
authenticated device, resulting in $O(|S_c|)$ metadata. The encrypted
bitstream payload is excluded because its size depends on the target
architecture and application.

Capability-aware clustering limits aggregate-authorization processing to
the selected devices. Adding devices within an existing vendor domain does
not increase the number of online aggregate AES-key encryption operations.
This number increases only when additional vendor domains participate.

Thus, aggregate-authorization preparation scales with the selected
devices, online aggregate AES-key encryption with the participating vendor
domains, and bitstream preparation with the number of architecture
variants. Dynamic membership updates remain $O(1)$ at the
aggregate-authorization update level.
\section{Experimental Evaluation}
\label{sec:eval}

\subsection{Experimental Setup}
\label{sec:experimental_setup}

The SAEID implementation was evaluated at both software and embedded
levels. Software experiments were conducted on an Intel Core i5-8500
3.00~GHz six-core processor running Ubuntu 22.04.2 LTS. The cryptographic
core uses PBC~0.5.14 with Type-A pairings over the 160-bit
Solinas-prime curve adopted from AgEID. AES-256-GCM was implemented
using the OpenSSL~3.0.18 EVP interface. The implementation covers the
SAEID cryptographic operations, dynamic membership, multi-vendor and
multi-provider authorization, identity-based authentication, and
identity-bound bitstream integrity.

Before performance evaluation, an automated correctness suite containing
23 functional checks was executed. The tests cover authorized and
unauthorized decryption, revoked and wrong-vendor devices, dynamic
membership, forward- and backward-membership behavior, device
authentication, and bitstream-integrity verification. All 23 tests
passed.

Performance measurements used operation-specific repetition counts.
Primitive operations were measured over 2000 repetitions, general
cryptographic and scalability experiments over 100 trials, and
decryption over 300 trials. Cross-domain isolation and membership
experiments used 200 and 100 trials, respectively. Identity-based
authentication was evaluated over 1000 sessions, and bitstream-payload
processing over 30 payload sizes. Unless otherwise stated, reported
software timings are measured execution times of the SAEID
implementation.

Published AgEID measurements are used only as reference baselines where
comparison is relevant. These include 19.4~ms per bilinear pairing,
$6.2\times10^{-3}$~ms per scalar multiplication,
$3.01\times10^{-3}$~ms per point addition, published membership-update
results, the reported 401~ms ARM Cortex-A9 decryption result, and
published FPGA resource figures. These values were not re-measured in
the present work.

For embedded evaluation, the SAEID software was cross-compiled and
executed on a physical Xilinx ZC702 platform based on the Zynq-7000
\texttt{xc7z020clg484-1} device with an ARM Cortex-A9 processor. The
evaluation executes the complete
\texttt{Setup}/\texttt{KeyGen}/\texttt{Extract}/\texttt{Encrypt}/
\texttt{Decrypt} path and evaluates recovery of the aggregate-protected
deployment secret and the corresponding AES-256 deployment key for
authorized and unauthorized devices.

A separate synthesizable RTL implementation of the elliptic-curve
scalar-multiplication component was evaluated using Vivado~2019.1 on
the same \texttt{xc7z020clg484-1} device. The RTL implements the ECC
arithmetic required for scalar multiplication but does not include a
bilinear pairing engine. Accordingly, its resource utilization and
post-implementation timing are reported as an ECC-core evaluation
rather than as a complete SAEID hardware accelerator.

Aggregate-authorization communication measurements exclude encrypted
bitstream payload transfer; payload-processing cost is evaluated
separately as a function of payload size.

\subsection{Cryptographic Performance and Scalability}
\label{sec:crypto_perf}

The cryptographic evaluation measures the principal SAEID operations and
examines the scalability of aggregate-key preparation and AES-key
encryption with the number of authorized FPGA devices. The underlying
pairing-group primitives are also measured to characterize the execution
cost of the construction.

Table~\ref{tab:crypto_perf} summarizes the measured primitive and
cryptographic-operation latencies. A bilinear pairing requires
$0.702$~ms, while scalar multiplication and point addition require
$0.954$~ms and $0.0030$~ms, respectively. Hash-to-group operations
$H_1/H_2$ require $2.188$~ms, while the SHA-256 operation used for
session-key derivation requires approximately $0.0008$~ms. At
$|S_v|=1000$, \texttt{SAEID.Extract} requires $4.477$~ms and
\texttt{SAEID.Encrypt} requires $3.729$~ms. With the
authorization-dependent decryption term precomputed,
\texttt{SAEID.Decrypt} requires $1.400$~ms.

\begin{table}[t]
\centering
\caption{Measured Cryptographic Operation Latency of SAEID}
\label{tab:crypto_perf}
\footnotesize
\begin{tabular}{lc}
\toprule
\textbf{Operation} & \textbf{Mean Time (ms)}\\
\midrule
Bilinear pairing & 0.702\\
Scalar multiplication & 0.954\\
Point addition & 0.0030\\
Hash-to-group ($H_1/H_2$) & 2.188\\
SHA-256 ($H_3$ input) & 0.0008\\
\texttt{SAEID.Extract}, $|S_v|=1000$ & 4.477\\
\texttt{SAEID.Encrypt} & 3.729\\
\texttt{SAEID.Decrypt}, precomputed term & 1.400\\
\bottomrule
\end{tabular}
\end{table}

For a vendor-specific deployment set $S_v$, \texttt{SAEID.Extract}
constructs the aggregate-authorization component by processing the
selected devices and therefore scales with $|S_v|$. Once this component
is available, \texttt{SAEID.Encrypt} generates a fixed-size aggregate
ciphertext using a constant number of cryptographic operations.

The measured extraction time increases almost linearly from 10 to
10,000 authorized devices. A linear fit to the measured data gives
\begin{equation}
T_{\mathrm{Extract}}
\approx
4.5046\times10^{-3}|S_v|-0.0151
\ \mathrm{ms},
\qquad R^2=0.9999997,
\label{eq:extract_regression}
\end{equation}
indicating that the measured scaling closely follows the expected linear
growth with $|S_v|$.

At an authorization-set size of $10{,}000$, the measured
\texttt{SAEID.Extract} time is $45.02$~ms. This scalability experiment
evaluates the software implementation with progressively larger logical
authorization sets. It does not represent a physical deployment of
$10{,}000$ FPGA devices.
In
contrast, \texttt{SAEID.Encrypt} remains approximately constant over the
same range, with measured values between $3.73$ and $3.74$~ms. The fitted
slope is approximately $-0.38$~ns per additional device and is negligible
in magnitude. Thus, the online aggregate AES-key encryption cost is
approximately independent of the number of authorized devices within a
vendor domain.

\begin{figure}[t]
\centering
\begin{tikzpicture}
\begin{axis}[
    width=\columnwidth,
    height=5.2cm,
    xlabel={Number of Authorized FPGA Devices},
    ylabel={Execution Time (ms)},
    xmode=log,
    ymode=log,
    xmin=8,
    xmax=12000,
    ymin=0.02,
    ymax=100,
    xtick={10,50,100,500,1000,5000,10000},
    xticklabels={10,50,100,500,1000,5000,10000},
    legend pos=north west,
    grid=major,
    mark size=1.8pt
]

\addplot[
    mark=*,
    thick
] coordinates {
    (10,0.028985)
    (50,0.213790)
    (100,0.439161)
    (500,2.232671)
    (1000,4.476846)
    (5000,22.526084)
    (10000,45.022735)
};
\addlegendentry{\texttt{SAEID.Extract}}

\addplot[
    mark=square*,
    thick
] coordinates {
    (10,3.739700)
    (50,3.731444)
    (100,3.741499)
    (500,3.740996)
    (1000,3.729443)
    (5000,3.736242)
    (10000,3.733051)
};
\addlegendentry{\texttt{SAEID.Encrypt}}

\end{axis}
\end{tikzpicture}
\caption{Measured scalability of SAEID aggregate-key extraction and
online AES-key encryption with increasing numbers of authorized FPGA
devices. Extraction scales linearly, whereas online encryption remains
approximately constant.}
\label{fig:enc-scale-plot}
\end{figure}

The software decryption time is $5.901$~ms when the
authorization-dependent term is computed online, reducing to
$1.400$~ms with precomputation for a 1000-device set. Thus, after
precomputation, the online decryption cost is approximately independent
of the authorized-set size. The corresponding ZC702 measurement is
reported in Section~\ref{sec:embedded_eval}. Overall, the measurements
show that the principal device-set-dependent cost lies in aggregate-key
preparation, while online encryption and precomputed-term decryption
remain approximately constant. The published AgEID pairing latency of
$19.4$~ms is used only as a reference because it was obtained on a
different platform and implementation.

\subsection{Deployment, Security, Communication, and Storage Evaluation}
\label{sec:deployment_eval}

This evaluation measures the deployment-level overhead of SAEID, including
dynamic membership, authorization metadata, cryptographic storage, device
authentication, and device-bound bitstream integrity. The analytical
scaling relationships are established in
Sections~\ref{sec:complexity}--\ref{sec:enc-scaling}; the present section
focuses on their measured overheads.

\textbf{Dynamic membership.}
Device addition and revocation modify only the affected vendor-specific
aggregate-authorization component. As shown in
Table~\ref{tab:membership}, the measured update times remain
approximately constant over the evaluated deployment sizes: device
addition requires $4.95$--$5.09~\mu$s, while revocation requires
$4.49$--$5.46~\mu$s.

\begin{table}[t]
\centering
\caption{Measured Dynamic Membership Update Overhead of SAEID}
\label{tab:membership}
\footnotesize
\begin{tabular}{lcc}
\toprule
\textbf{Operation} & \textbf{Time Range ($\mu$s)}
& \textbf{Scaling}\\
\midrule
Device addition & 4.95--5.09 & $O(1)$\\
Device revocation & 4.49--5.46 & $O(1)$\\
\bottomrule
\end{tabular}
\end{table}

The reported update cost excludes provisioning and identity-key generation
for newly enrolled devices. Unaffected devices require no credential
regeneration, and previously generated ciphertexts are not re-encrypted
solely because of a membership update.

\textbf{Communication and storage overhead.}
Each vendor-specific aggregate ciphertext contains three group elements and
occupies 384~B in the implementation. The identity-based authentication
(IBA) proof contains two group elements and adds 256~B per authenticated
device session. The following communication expressions account for the
aggregate ciphertext and IBA proof only; session-binding fields,
nonces/timestamps, and bitstream-signature metadata are excluded.

Thus, for $V$ participating vendor domains and one authenticated device
session,
\begin{equation}
C_{\mathrm{auth}}
=
(384V+256)~\mathrm{B}.
\label{eq:communication_overhead}
\end{equation}

For $N_{\mathrm{auth}}$ authenticated device sessions, the corresponding
authorization metadata is
\begin{equation}
C_{\mathrm{meta}}
=
(384V+256N_{\mathrm{auth}})~\mathrm{B}.
\label{eq:communication_sessions}
\end{equation}

\begin{table}[t]
\centering
\caption{Communication Overhead of SAEID Authorization Metadata}
\label{tab:communication}
\footnotesize
\begin{tabular}{ccc}
\toprule
\textbf{Vendor Domains} &
\textbf{Aggregate Ciphertext (B)} &
\textbf{+ IBA Proof (B)}\\
\midrule
1 & 384 & 640\\
2 & 768 & 1024\\
3 & 1152 & 1408\\
5 & 1920 & 2176\\
10 & 3840 & 4096\\
\bottomrule
\end{tabular}
\end{table}

Hence, authorization metadata grows with the number of participating
vendor domains and authenticated sessions, but remains independent of the
number of devices represented within each vendor-specific aggregate
ciphertext. The encrypted bitstream payload is excluded from
$C_{\mathrm{auth}}$ and $C_{\mathrm{meta}}$ because its size depends on
the target architecture and application. \autoref{tab:communication}
summarizes the communication overhead of SAEID authorization metadata.

At each FPGA device, SAEID stores one individual decryption key and one
identity-based private key, resulting in constant device-resident
cryptographic state with respect to deployment size, as shown in
\autoref{tab:storage}. Infrastructure-side provisioning records scale with
the number of enrolled devices, while provider identity credentials scale
with the number of registered provider--vendor domains.

\begin{table}[t]
\centering
\caption{Storage Requirements of AgEID and SAEID}
\label{tab:storage}
\footnotesize
\begin{tabular}{p{0.48\columnwidth}cc}
\toprule
\textbf{Storage Component} & \textbf{AgEID} & \textbf{SAEID}\\
\midrule
Individual decryption key & 1 group element & 1 group element\\
Identity-based device key & -- & 1 group element\\
Certificate / certificate chain & -- & --\\
Vendor provisioning records & $O(n_v)$ & $O(n_v)$\\
Provider identity key & -- & 1 per registered provider--vendor domain\\
CSP deployment metadata & $O(|S|)$ & $O(|S_c|)+O(V)$\\
Freshness state & -- & $O(1)$/device\\
\bottomrule
\end{tabular}
\end{table}

\textbf{Authentication and bitstream integrity.}
The complete software implementation of the certificate-free identity-based
authentication and identity-bound signature mechanisms was evaluated on
the Intel Core i5-8500 platform. As shown in
\autoref{tab:authoverhead}, the measured IBA device response is
$1.968$~ms, while provider-side verification requires $2.382$~ms.
Identity-bound signature generation and verification require approximately
$1.98$~ms and $2.40$~ms, respectively.

\begin{table}[t]
\centering
\caption{Measured Device Authentication and Bitstream-Verification Overhead}
\label{tab:authoverhead}
\footnotesize
\begin{tabular}{p{0.68\columnwidth}c}
\toprule
\textbf{Operation} & \textbf{Measured Time (ms)}\\
\midrule
IBA device response & 1.968\\
IBA verifier verification & 2.382\\
Identity-bound signature generation & 1.98\\
Identity-bound signature verification & 2.40\\
\bottomrule
\end{tabular}
\end{table}

Across 1000 authentication sessions, all valid device responses were
accepted. Negative tests involving incorrect device identities, modified
proof components, modified challenges, and replayed responses were
rejected. For bitstream integrity, modified or substituted payloads,
tampered AES-GCM tags, incorrect target devices, incorrect providers, and
modified release metadata were rejected in all seven tested negative
cases.

For $\mu$ architecture-specific variants, identity-bound signatures are
generated for the required variant--device associations during deployment
preparation. When signatures are generated independently for each
variant--device pair, this workload is $O(\mu|S_v|)$ for vendor domain
$v$, while verification remains $O(1)$ per deployed variant. This cost is
separate from the constant-cost authentication of an individual device
session.

\subsection{Embedded FPGA Evaluation and Resource Utilization}
\label{sec:embedded_eval}

The embedded evaluation validates the SAEID cryptographic path on the
Xilinx ZC702 platform, which uses a Zynq-7000 XC7Z020 device with an ARM
Cortex-A9 processor. The complete SAEID software path comprising
\texttt{Setup}, \texttt{KeyGen}, \texttt{Extract}, \texttt{Encrypt}, and
\texttt{Decrypt} was cross-compiled for ARM and executed on the physical
Cortex-A9 processor using GMP and PBC. Functional tests confirmed that an
authorized device successfully recovers the protected AES key $m$, whereas
an unauthorized device fails the corresponding recovery operation.

For the complete software \texttt{Decrypt()} operation, including the two
bilinear pairings, the measured execution time on the ZC702 is
$191.126$~ms, as shown in \autoref{tab:arm_decrypt}. 
The AgEID study reports $401$~ms for its Cortex-A9
decryption path. This value is retained only as a reference because the
implementation and build configurations are not identical. The present
ARM build uses GMP compiled without ARM assembly optimization, so the
measured value represents the current implementation rather than an
optimized Cortex-A9 performance limit.

\begin{table}[t]
\centering
\caption{Embedded ARM Cortex-A9 Decryption Performance}
\label{tab:arm_decrypt}
\footnotesize
\begin{tabular}{lc}
\toprule
\textbf{Platform / Implementation} & \textbf{\texttt{Decrypt()} (ms)}\\
\midrule
SAEID on ZC702 Cortex-A9 & 191.126\\
AgEID reported Cortex-A9 & 401\\
\bottomrule
\end{tabular}
\end{table}

In addition to the embedded software evaluation, a synthesizable RTL
implementation of the elliptic-curve scalar-multiplication component was
developed and evaluated using Vivado~2019.1 for the
\texttt{xc7z020clg484-1} device. The RTL evaluation focuses on this
cryptographic primitive; a complete hardware accelerator for the SAEID
protocol was not implemented. The core implements the elliptic-curve
arithmetic underlying the scalar-multiplication operations used in SAEID
key generation and aggregate-key computation. Jacobian-coordinate point
doubling, mixed point addition, modular multiplication, and modular
inversion were verified against the PBC software implementation. Eight
full 160-bit scalar-multiplication functional tests achieved bit-exact
agreement with PBC, while the modular multiplication and inversion
regression tests achieved 10/10 successful cases.

The post-implementation resource utilization is summarized in
Table~\ref{tab:resources}. The design uses a 512-bit modular-arithmetic
datapath and does not use DSP blocks. The reported power is a Vivado
vectorless estimate rather than a physical-board measurement.

\begin{table}[t]
\centering
\caption{FPGA Resource Utilization of the SAEID ECC
Scalar-Multiplication Core}
\label{tab:resources}
\footnotesize
\begin{tabular}{lc}
\toprule
\textbf{Metric} & \textbf{SAEID ECC Core}\\
\midrule
LUT utilization & 45.93\%\\
Register utilization & 17.97\%\\
Slice utilization & 70.14\%\\
DSP utilization & 0\%\\
Block RAM utilization & 0.36\%\\
Power & 0.615 W$^\dagger$\\
Timing at 100 MHz & Not met\\
Estimated frequency & $\sim$20.8 MHz$^\ddagger$\\
\bottomrule
\end{tabular}

\vspace{1mm}
\raggedright
\scriptsize
$^\dagger$Vivado vectorless estimate.\\
$^\ddagger$Estimated from the post-implementation timing result.
\end{table}

The implementation has a worst negative slack of approximately
$-38.046$~ns for the 100-MHz constraint. This timing limitation is
associated with the wide, unpipelined modular-arithmetic paths in the
current correctness-oriented design. Further pipelining and architectural
optimization could reduce the critical-path delay and improve the timing
characteristics; the resulting resource impact would depend on the
implementation architecture.

The AgEID study reports 2--17\% LUT, 1--13\% register, and 1--4\% slice
overhead, with approximately 0.1--0.2~W power. These values are retained
only as reference values because the architectural details and measurement
conditions of the AgEID hardware implementation are not sufficiently
specified for a direct resource comparison. Moreover, the present RTL
implements only the ECC scalar-multiplication component; the bilinear
pairing engine required for a complete hardware cryptographic accelerator
was not synthesized. Thus, the FPGA resource results characterize the
implemented SAEID ECC component, while the ZC702 result demonstrates the
complete software decryption path on real embedded hardware.

\section{Discussion}
\label{sec:discussion}

The results show that SAEID preserves the scalable aggregate-authorization
property of AgEID while extending it to heterogeneous multi-vendor
deployment. Aggregate-key preparation grows with the authorized set size,
whereas online aggregate AES-key encryption remains approximately
independent of the number of devices within a vendor domain. Dynamic
membership updates likewise remain effectively constant-cost, and the
encryption workload grows with the number of participating vendor domains,
consistent with the $O(V)$ construction.

The additional identity-based authentication and bitstream-integrity
mechanisms introduce bounded software overhead. The complete
authentication exchange requires approximately $4.35$~ms, while
identity-bound signature generation and verification require approximately
$1.98$~ms and $2.40$~ms, respectively. The communication overhead is also
separated by deployment factor: aggregate ciphertext metadata scales with
vendor domains, authentication metadata with authenticated sessions, and
bitstream payload size with the target architecture.

The embedded evaluation confirms that the aggregate-encryption software
path executes on the ZC702 Cortex-A9, while the RTL results validate the
implemented ECC scalar-multiplication component. The latter should be
interpreted as a component-level evaluation because a complete bilinear
pairing accelerator was not implemented. Similarly, the 10,000-device
scalability result represents a software evaluation with a large logical
authorization set rather than a physical deployment of 10,000 FPGAs.

A direct quantitative comparison with prior FPGA-security implementations
is limited by differences in cryptographic primitives, platforms, and
measurement conditions. The AgEID Cortex-A9 result is therefore retained
only as a reference baseline. The main remaining limitations are the
current ECC hardware implementation and the absence of continuous runtime
integrity and hardware-rooted attestation, which are outside the scope of
the present work.

\section{Conclusion}
\label{sec:conclusion}

This paper presented SAEID, an extension of AgEID for secure FPGA
bitstream deployment in heterogeneous multi-vendor cloud environments.
SAEID extends aggregate authorization with capability-aware deployment,
multiple IP providers, dynamic device membership, and certificate-free
identity-based device authentication and device-bound bitstream
verification, while retaining individual device recovery within each
vendor domain.
The evaluation validates the proposed framework through software,
embedded, and RTL experiments. The results confirm the expected scaling
behavior of aggregate authorization and dynamic membership, and the
complete software decryption path was validated on the ZC702 Cortex-A9
platform. The RTL evaluation further validates the ECC
scalar-multiplication component used by the cryptographic construction.
The current hardware evaluation is limited to the ECC
scalar-multiplication component and does not include a complete bilinear
pairing engine. Future work will therefore focus on a complete hardware
cryptographic accelerator, together with hardware-rooted attestation and
continuous post-configuration runtime security.

\begin{acks}
We would like to thank the Department of Science and Technology (DST),
Science and Engineering Research Board (SERB), Government of India, for
supporting this work under Grant No.~CRG/2018/2093.
\end{acks}

\appendix

\section{Correctness and Security Proof Sketches}
\label{app:proofs}

This appendix collects the correctness derivations and security proof
sketches for the SAEID framework. The underlying aggregate-encryption
construction is inherited from AgEID~\cite{debnath2024ageid}; consequently,
only the properties affected by multi-vendor authorization, dynamic
membership, identity-based authentication, and device-bound bitstream
protection are derived here. Standard properties of bilinear pairings, the
underlying cryptographic hardness assumptions, and the security of AES-GCM
are used without restating their definitions. Notation, the system and
threat models, and the protocol definitions follow the main body of the
paper (Section~\ref{sec:system-threat-model} onward).

\subsection{Correctness of Aggregate Decryption}

Let $S_{c,v}$ denote the authorized device set associated with vendor
$v$. For an authorized device $i\in S_{c,v}$, the aggregate decryption
procedure recovers the group-valued deployment secret $m_G\in G_T$ as

\begin{equation}
\widehat{m}_G
=
c_3\cdot
\frac{
\hat{e}\left(
d_{v,i}+b_{i,S_{c,v}},c_1
\right)
}{
\hat{e}\left(
g_{v,i},c_2
\right)
}.
\label{app:eq:supp-aggregate-decrypt}
\end{equation}

\begin{theorem}[Aggregate-Decryption Correctness]
For every authorized device $i\in S_{c,v}$, decryption of a ciphertext
generated for $S_{c,v}$ recovers the original group-valued deployment
secret $m_G$.
\end{theorem}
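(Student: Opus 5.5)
The plan is a direct algebraic verification. Write every public element as a scalar multiple of the generator, $g_{v,k}=\alpha_v^k g$, and express each pairing in \eqref{app:eq:supp-aggregate-decrypt} as a power of $\hat e(g,g)$ using bilinearity. I will use the ciphertext exactly as defined in Section~\ref{sec:aggregate_workflow}: $c_1=tg$, $c_2=t(PK_v+K_{S_{c,v},v})$, $c_3=m_G\cdot\hat e(g_{v,n_v},tg_{v,1})$. The key $d_{v,i}=\gamma_v g_{v,i}$ is from \eqref{eq:device_decryption_key}, the aggregate component $K_{S_{c,v},v}=\sum_{j\in I_{c,v}}g_{v,n_v+1-j}$ is from \eqref{eq:aggregate_authorization}, and $b_{i,S_{c,v}}$ is from \eqref{eq:device_recovery_term}. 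Abbreviate $n=n_v$, $\alpha=\alpha_v$, $\gamma=\gamma_v$, $E=\hat e(g,g)$.

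First I compute the three factors. For the mask, $c_3=m_G\cdot E^{t\alpha^{n}\alpha}=m_G\cdot E^{t\alpha^{n+1}}$. For the numerator,
\[
\hat e\bigl(d_{v,i}+b_{i,S_{c,v}},c_1\bigr)=E^{\,t\left(\gamma\alpha^i+\sum_{j\in I_{c,v},\,j\neq i}\alpha^{n+1-j+i}\right)}.
\]
For the denominator, since $c_2=t\bigl(\gamma g+\sum_{j\in I_{c,v}}\alpha^{n+1-j}g\bigr)$,
\[
\hat e\bigl(g_{v,i},c_2\bigr)=E^{\,t\left(\gamma\alpha^i+\sum_{j\in I_{c,v}}\alpha^{n+1-j+i}\right)}.
\]

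Next I take the quotient. The $\gamma\alpha^i$ terms cancel, which is why the device key must carry $\gamma$ paired against $c_1=tg$. All summands with $j\neq i$ also cancel. Because $i\in I_{c,v}$, the denominator contains exactly one extra summand, the $j=i$ term $\alpha^{n+1}$. The ratio is therefore $E^{-t\alpha^{n+1}}$, and multiplying by $c_3$ gives exactly $m_G$. Applying $\mathrm{SHA256}\circ\operatorname{Serialize}_{G_T}$ then gives $\widehat K_{\mathrm{AES}}=K_{\mathrm{AES}}$.

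The main point needing care is consistency rather than difficulty. I must check that the device can actually form $b_{i,S_{c,v}}$ from public parameters. Its indices $n+1-j+i$ with $j\neq i$ lie in $\{2,\dots,2n\}\setminus\{n+1\}$, and these are all published in \eqref{eq:public_group_elements}; only the withheld element $g_{v,n+1}$ is avoided. I would also remark on two side issues. The worked example of Section~\ref{subsec:ageid-saeid-example} writes $c_1=tg_{v,1}$, but the derivation uses the general construction $c_1=tg$. The notation $g^{\alpha^k}$ should be read additively as $\alpha^k g$. Finally, the decryption holds for each vendor domain independently, so it extends verbatim to every $v\in\mathcal V_{S_c}$.
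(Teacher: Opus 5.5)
Your proposal is correct and follows the same route as the paper's sketch. You substitute $d_{v,i}$, $b_{i,S_{c,v}}$, $c_1$, $c_2$, $c_3$ and use bilinearity to cancel the $\gamma_v$-terms and all $j\neq i$ summands, leaving $\hat e(g,g)^{-t\alpha_v^{n_v+1}}$, which removes the mask in $c_3$. You spell out the cancellation the paper only attributes to the AgEID ``telescoping argument.'' Your remark is also accurate: with the worked example's $c_1=tg_{v_1,1}$ this cancellation fails, so the main-body $c_1=tg$ must be used.
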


\begin{proof}[Proof sketch]
Substituting the definitions of $d_{v,i}$,
$b_{i,S_{c,v}}$, $c_1$, $c_2$, and
\[
c_3
=
m_G\cdot
\hat{e}\left(g_{v,n_v},tg_{v,1}\right)
\]
into Eq.~\eqref{app:eq:supp-aggregate-decrypt}, and applying bilinearity
of $\hat{e}(\cdot,\cdot)$, the terms introduced by the aggregate-
authorization component cancel through the same telescoping argument
as in the underlying AgEID construction. The remaining pairing factor
cancels the masking factor in $c_3$, yielding
\begin{equation}
\widehat{m}_G
=
m_G\cdot
\frac{
\hat{e}(g_{v,n_v},tg_{v,1})
}{
\hat{e}(g_{v,n_v},tg_{v,1})
}
=
m_G.
\end{equation}
Hence, every authorized device recovers the same group-valued deployment
secret $m_G$.
\end{proof}

After recovering $m_G$, the device derives the AES-256 deployment key as

\begin{equation}
\widehat{K}_{\mathrm{AES}}
=
\mathrm{SHA256}
\left(
\operatorname{Serialize}_{G_T}(\widehat{m}_G)
\right).
\label{app:eq:supp-recovered-aes-key}
\end{equation}

Since $\widehat{m}_G=m_G$ for an authorized device,

\begin{equation}
\widehat{K}_{\mathrm{AES}}
=
K_{\mathrm{AES}}.
\end{equation}

\subsection{Deployment-Secret Confidentiality}
\label{app:sec:confidentiality}

\begin{theorem}[Deployment-Secret Confidentiality]
\label{app:thm:confidentiality}
Under the BDHE assumption underlying the aggregate-encryption
construction, an adversary that does not possess the authorization
required for the target deployment cannot distinguish the protected
group-valued deployment secret $m_G\in G_T$ from a random element of
$G_T$ except with negligible probability.
\end{theorem}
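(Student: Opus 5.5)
The plan is to reduce from the $n$-BDHE decision problem in the style of the Boneh--Gentry--Waters reduction, with the construction of \cite{debnath2024ageid} as the template. The reduction works in a single vendor domain $v$. Multi-vendor deployments are then handled by a hybrid over the $V$ participating domains: the domains use independent $(\alpha_v,\gamma_v,s_v)$, so the simulator can generate every non-target domain honestly. This hybrid costs a factor of at most $V$ in the advantage.

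I would use the selective game. The adversary announces the target set $S^\ast=S_{c,v}$ in advance. It may corrupt devices $i\notin S^\ast$, which hands it $d_{v,i}$ and, since the identity domain is independent, also $\mathrm{SK}_{\mathrm{ID}_{v,i}}$. It then receives the challenge $C_v^\ast$ together with either $m_G$ or a random element of $G_T$.

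The simulator receives an instance
\[
\bigl(g,\;h,\;g_1,\ldots,g_n,\;g_{n+2},\ldots,g_{2n},\;Z\bigr),
\qquad g_k=g^{\alpha^k},
\]
where $Z$ is either $\hat e(g_{n+1},h)$ or random, and it sets $n_v=n$. It samples $u\in\mathbb{Z}_p$ and programs
\[
PK_v=ug-\sum_{j\in S^\ast}g_{n+1-j},
\]
so that $PK_v+K_{S^\ast,v}=ug$. The key of a corrupted device $i\notin S^\ast$ is supposed to be $\gamma g_i$ with $\gamma=u-\sum_{j\in S^\ast}\alpha^{n+1-j}$. This equals
\[
u g_i-\sum_{j\in S^\ast}g_{n+1-j+i},
\]
and the sum never contains $g_{n+1}$ because $i\notin S^\ast$, so the simulator can compute it. The identity keys are simulated with an independently chosen $s_v$.

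For the challenge, the simulator sets $c_1=h$, which implicitly fixes $h=tg$. It sets $c_2=uh=t(PK_v+K_{S^\ast,v})$, and builds the $c_3$ mask from $Z$. This last step is where I expect the main obstacle. The paper writes the mask as $\hat e(g_{v,n_v},tg_{v,1})$, which equals $\hat e(g_{n+1},tg)$. That is exactly the BDHE target, and it is consistent with the decryption equation Eq.~\eqref{app:eq:supp-aggregate-decrypt}, which I would verify first using the correctness theorem above. Before relying on the paper's text, though, I would check the BGW-style indexing: the correctness theorem above fixes the mask to whatever decryption actually cancels. Once the indices are aligned, I set $c_3=m_G\cdot Z$. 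If $Z$ is real, the adversary's view is exactly the real game; if $Z$ is random, $c_3$ is uniform and independent of $m_G$.

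The session-binding tags $T_v$ and the identity-based transcripts are computed with keys that are independent of $\alpha,\gamma$, so the simulator produces them honestly and they give the adversary no information about $m_G$. The resulting bound is
\[
\mathrm{Adv}^{\mathrm{conf}}\le V\cdot\mathrm{Adv}^{n\text{-BDHE}}.
\]
For dynamic-membership epochs, each epoch's ciphertext uses a fresh $t$, and the target set is the set for the challenged epoch. A revoked device, or a device added only in a later epoch, is therefore simply a corrupted $i\notin S^\ast$ for that ciphertext, and the same argument covers it.
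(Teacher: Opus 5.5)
Your proposal is correct. It is the BDHE-embedding argument the paper's sketch asserts, carried out concretely as the Boneh--Gentry--Waters reduction: programming $PK_v=ug-\sum_{j\in S^\ast}g_{n+1-j}$ and answering with $(h,uh,m_G\cdot Z)$ works for the form $c_1=tg$ of Section~\ref{sec:aggregate_workflow}, so your indexing check goes through (the $c_1=tg_{v_1,1}$ of the worked example in Section~\ref{subsec:ageid-saeid-example} would not even decrypt correctly). What you add beyond the paper is explicitness: embedding the challenge in $PK_v$ requires $S^\ast$ up front, hence a selective game; and because the same $m_G$ is encrypted under every participating vendor domain, a hybrid over those domains is needed, which is why your bound carries the factor $V$ that is absent from the paper's single $\mathrm{Adv}^{\mathrm{BDHE}}$ term.
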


\begin{proof}[Proof sketch]
SAEID preserves the aggregate-encryption algebra of the underlying
construction. In particular,
\[
c_3
=
m_G\cdot
\hat{e}(g_{v,n_v},tg_{v,1})
\]
uses the same pairing-based masking structure as the underlying
aggregate-encryption scheme, with the protected plaintext represented
as a group element in $G_T$.

A simulator can embed a BDHE challenge into the aggregate-encryption
public parameters and challenge ciphertext. Any adversary that
distinguishes the protected $m_G$ from a random element with
non-negligible advantage would therefore yield a non-negligible
advantage against the underlying BDHE-based aggregate-encryption
security game. Hence,
\begin{equation}
\operatorname{Adv}^{\mathrm{IND\text{-}CPA}}_{\mathrm{SAEID}}(\mathcal{A})
\leq
\operatorname{Adv}^{\mathrm{BDHE}}(\mathcal{B})
+
\operatorname{negl}(\lambda).
\end{equation}
Thus, $m_G$ remains confidential to unauthorized entities.
\end{proof}

The AES-256 deployment key is not exposed directly to the
aggregate-encryption layer. Instead, both provider and authorized
device derive it from the group-valued secret using
\[
K_{\mathrm{AES}}
=
\mathrm{SHA256}
\left(
\operatorname{Serialize}_{G_T}(m_G)
\right).
\]
Consequently, only an entity that successfully recovers the authorized
$m_G$ can derive the corresponding $K_{\mathrm{AES}}$. The protected
architecture-specific variants are then confidential under
AES-256-GCM, assuming unique IVs for encryptions performed under the
same key.

\subsection{Authentication Soundness and Replay Resistance}
\label{app:sec:auth-soundness}

For device identity $ID_{v,i}$, let $Q_{v,i} = H_1(ID_{v,i})$ and let
$SKID_{v,i}$ denote the corresponding identity private key. For a fresh
$(N, \tau)$, the device selects $r$ and computes
\begin{equation}
  U = r Q_{v,i}, \qquad
  h = H_2\big(ID_{v,i} \,\|\, N \,\|\, \tau \,\|\, U\big),
\end{equation}
followed by the response
\begin{equation}
  V = (r + h)\, SKID_{v,i}.
  \label{app:eq:response}
\end{equation}

The verifier checks
\begin{equation}
  \hat{e}(g, V)
  \;\overset{?}{=}\;
  \hat{e}\big(P_{\mathrm{pub},v},\, U + h Q_{v,i}\big).
  \label{app:eq:verify}
\end{equation}

The replay-resistance argument assumes that the freshness state used to
validate $N$ and $\tau$ is maintained in trusted persistent storage across
ordinary power cycles and FPGA reconfiguration. Loss or rollback of this
state is outside the present threat model.

\begin{theorem}[Authentication Soundness and
Authentication-Transcript Replay Resistance]
\label{app:thm:auth}
In the random-oracle model, an adversary without the target device's
identity private key cannot produce an accepted proof for a fresh
challenge except with negligible probability. A previously observed
authentication transcript cannot be replayed successfully in a fresh
session.
\end{theorem}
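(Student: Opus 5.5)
The plan is to treat the device-authentication exchange as a non-interactive identity-based Fiat--Shamir proof of knowledge of $\mathrm{SK}_{\mathrm{ID}_{v,i}}=s_vQ_{v,i}$. It has the same algebraic shape as the Cha--Cheon identity-based signature~\cite{chacheon2003}, with message $N\parallel\tau$. Soundness will be reduced to the computational Diffie--Hellman problem in $\mathbb{G}$, which is hard even with access to the pairing $\hat e$ (the gap-DH setting), via the forking lemma. Replay resistance will then be argued separately from the binding of $h$ to $(N,\tau)$ together with the trusted freshness state.

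\emph{Step 1 (correctness and key-free simulation).} First we record that an honest transcript verifies. By bilinearity,
\[
\hat e(g,V)=\hat e(g,(r+h)s_vQ_{v,i})=\hat e(s_vg,\,rQ_{v,i}+hQ_{v,i}).
\]
Next we show that transcripts can be simulated without the identity key, in the random-oracle model for $H_2$. Pick $h,w\in\mathbb{Z}_p$ uniformly, set $U=wg-hQ_{v,i}$ and $V=wP_{\mathrm{pub},v}$, and program $H_2(\mathrm{ID}_{v,i}\parallel N\parallel\tau\parallel U)=h$. Since an honest $U=rQ_{v,i}$ is uniform in $\mathbb{G}$, the simulated transcripts are identically distributed. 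Programming fails only on an $H_2$ collision, which has probability at most $q_{H_2}q_S/p$. Hence observed transcripts and compromised non-target devices give the adversary nothing beyond what it can simulate itself.

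\emph{Step 2 (reduction).} Given a CDH instance $(g,ag,bg)$, the simulator $\mathcal{B}$ sets $P_{\mathrm{pub},v}=ag$. It guesses which $H_1$ query corresponds to the target identity and answers that query with $bg$. Every other identity is answered with $x_jg$ for a random $x_j$, so its private key $x_j(ag)$ can be handed out on corruption; the guess costs a factor $q_{H_1}$. The aggregate secret $\gamma_v$ is independent of $s_v$, so $\mathcal{B}$ generates the aggregate-encryption domain honestly. Suppose $\mathcal{A}$ outputs an accepting $(U,V)$ for the target identity and a fresh $(N,\tau)$ with challenge $h$. We rewind $\mathcal{A}$ to the $H_2$ query on $(\mathrm{ID}\parallel N\parallel\tau\parallel U)$ and answer it with a fresh $h'$. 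By the general forking lemma, we obtain a second accepting $V'$ on the same $U$ with probability roughly $\varepsilon^2/q_{H_2}$. The verification equation, combined with nondegeneracy of $\hat e$, forces $V=a(U+h\,bg)$ and $V'=a(U+h'\,bg)$. Therefore
\[
abg=(h-h')^{-1}(V-V').
\]
Note that this works even though $\mathcal{A}$ may choose $U$ arbitrarily rather than of the form $rQ$. This yields a bound of the form $\mathrm{Adv}^{\mathrm{auth}}\le\sqrt{q_{H_1}q_{H_2}\,\mathrm{Adv}^{\mathrm{CDH}}}+\mathrm{negl}(\lambda)$.

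\emph{Step 3 (replay).} A recorded transcript $(U,V)$ verifies only against $h=H_2(\mathrm{ID}_{v,i}\parallel N\parallel\tau\parallel U)$. In a fresh session the provider issues a new nonce $N'$, so $h'\neq h$ except with probability $1/p$ in the random-oracle model. Replay then fails because $V-V'$ would have to equal the nonzero element $(h-h')\mathrm{SK}_{\mathrm{ID}_{v,i}}$. Nonce reuse has probability at most $q^2/2^{k}$, and any reuse of $(N,\tau)$ is rejected by the persistent freshness state under the stated assumption. A mauled transcript that verifies on a fresh challenge is a forgery and is covered by Step 2.

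The main obstacle is Step 2: making the forking argument rigorous in the presence of adaptive identity-key corruptions, simulated sessions, and programming of both $H_1$ and $H_2$. I would follow the Cha--Cheon analysis, first reducing to a fixed-identity adversary (at the cost of the $q_{H_1}$ factor) and then applying the Bellare--Neven general forking lemma.
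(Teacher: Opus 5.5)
Your proposal is correct and follows the same three-part structure as the paper's sketch: bilinearity for correctness, Fiat--Shamir soundness in the random-oracle model for impersonation resistance, and the binding of $h$ to the fresh $(N,\tau)$ for replay resistance. Where the paper only appeals to ``the soundness of the underlying identity-based proof under the stated hardness assumption,'' you supply that argument by recognizing the exchange as a Cha--Cheon signature on $N\parallel\tau$ and giving the key-free transcript simulation and the CDH forking reduction; note, though, that the route you describe (a fixed-identity reduction losing $q_{H_1}$, then forking) yields $q_{H_1}\sqrt{q_{H_2}\,\mathrm{Adv}^{\mathrm{CDH}}}$, whereas your sharper $\sqrt{q_{H_1}q_{H_2}\,\mathrm{Adv}^{\mathrm{CDH}}}$ needs the guessed index to be argued independent of both forked runs, and either bound gives the negligibility the theorem requires.
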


\begin{proof}[Proof sketch]
For an honestly generated response, bilinearity of the pairing gives
\begin{equation}
  \hat{e}(g, V)
  =
  \hat{e}(g, SKID_{v,i})^{\,r+h}
  =
  \hat{e}\big(P_{\mathrm{pub},v},\, U + h Q_{v,i}\big),
\end{equation}
so the verification equation in Eq.~\eqref{app:eq:verify} holds.

For an adversary, producing a valid response for a fresh Fiat--Shamir
challenge without the corresponding identity key contradicts the
soundness of the underlying identity-based proof in the random-oracle
model, under the stated hardness assumption. Moreover, the challenge
contains the fresh values $N$ and $\tau$; therefore, a response from an
earlier session is bound to a different challenge and cannot be directly
replayed.
\end{proof}

\subsection{Authenticated Session Binding}
\label{app:sec:session-binding}

Following successful device authentication, the provider and device
derive the shared pairing value
\begin{equation}
K_{v,i,p}
=
\hat{e}
\left(
SK_{ID_{v,i}},Q_p^{(v)}
\right)
=
\hat{e}
\left(
Q_{v,i},SK_{ID_p}^{(v)}
\right).
\label{app:eq:supp_shared_pairing}
\end{equation}
The authenticated session key is then derived as
\begin{equation}
K_{\mathrm{sess},v,i}
=
H_3
\left(
K_{v,i,p}
\parallel
ID_{v,i}
\parallel
ID_p^{(v)}
\parallel
N_{v,i}
\parallel
\tau_{v,i}
\right).
\label{app:eq:supp_session_key}
\end{equation}

\begin{theorem}[Authenticated Session Binding]
\label{app:thm:session-binding}
Assuming the security of $H_3$ and the underlying MAC, an adversary that
does not possess the authenticated session key
$K_{\mathrm{sess},v,i}$ cannot produce a valid session-binding tag for a
modified, redirected, or cross-session aggregate ciphertext except with
negligible probability.
\end{theorem}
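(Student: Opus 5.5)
The plan is a short game-hopping reduction to the unforgeability (EUF-CMA) of the MAC. The key is derived from $K_{v,i,p}$ via $H_3$, so I first replace $K_{\mathrm{sess},v,i}$ by a uniformly random key, and then bound the probability of a forged tag.

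\textbf{Game $G_0$ and the forgery event.} Let $G_0$ be the real session-binding experiment. The adversary controls the CSP and may corrupt devices and providers other than the target pair $(\mathrm{FPGA}_{v,i},p)$. It observes honest tuples $(C_v,T_{v,i})$ for arbitrary sessions, together with their contexts
\[
D_{v,i}=\mathrm{ID}_{v,i}\parallel\mathrm{ID}_p^{(v)}\parallel N_{v,i}\parallel\tau_{v,i}\parallel v\parallel c\parallel C_v.
\]
It wins if the target device accepts some $(D^\ast,T^\ast)$ with $D^\ast$ never tagged under $K_{\mathrm{sess},v,i}$. The three attacks in the statement all have this form:
\begin{itemize}
\item a modified ciphertext changes the $C_v$ field of $D^\ast$;
\item a redirected ciphertext changes the device or provider identity field;
\item a cross-session ciphertext changes $(N,\tau)$, and these are fresh by Theorem~\ref{app:thm:auth} and the trusted freshness state.
\end{itemize}
In each case $D^\ast$ is a new message for the key in use, or a different session key applies. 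I will require $\parallel$ to be an injective (length-prefixed) encoding so that distinct contexts are distinct MAC inputs.

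\textbf{Game $G_1$: random session key.} In $G_1$ the session key of the target session is replaced by a uniformly random 256-bit string. In the random-oracle model for $H_3$, the adversary can tell $G_0$ from $G_1$ only by querying $H_3$ on the exact preimage $K_{v,i,p}\parallel\cdots$. Such a query requires computing
\[
K_{v,i,p}=\hat e(Q_{v,i},Q_p^{(v)})^{s_v}.
\]
Here $H_1$ is programmed so that $Q_{v,i}=ag$ and $Q_p^{(v)}=bg$, and $P_{\mathrm{pub},v}=s_v g$ is embedded as $cg$. A simulator $\mathcal{B}$ can then answer extraction queries for all other identities as in Boneh--Franklin. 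It guesses the target pair and one of the $q_{H_3}$ oracle queries, and so solves BDH with loss $q_{H_1}^2q_{H_3}$. The resulting bound is
\[
\left|\Pr[G_0]-\Pr[G_1]\right|\le q_{H_1}^2q_{H_3}\,\operatorname{Adv}^{\mathrm{BDH}}(\mathcal{B}).
\]
This is also where key separation is used. $s_v$ is reused in the Fiat--Shamir proofs and provider signatures, so the simulator must answer those without $s_v$. It does so by programming $H_2$: it chooses $V$ and $h$, then sets $U=V'-hQ$ in the exponent. This is the standard Cha--Cheon simulation.

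\textbf{Game $G_1$ to the MAC and the final bound.} In $G_1$, a forgery is directly a forgery against the MAC under an independent key. The reduction forwards honest tagging requests to its MAC oracle. Summing over at most $q_s$ sessions gives
\[
\operatorname{Adv}(\mathcal{A})\le q_{H_1}^2q_{H_3}\operatorname{Adv}^{\mathrm{BDH}}+q_s\operatorname{Adv}^{\mathrm{EUF\text{-}CMA}}_{\mathrm{MAC}}+\operatorname{negl}(\lambda).
\]
The $\operatorname{negl}(\lambda)$ term covers collisions of $H_3$ outputs across sessions with distinct contexts.

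\textbf{Main obstacle.} The hard step is $G_0\to G_1$. The statement only says ``security of $H_3$'', but I need $K_{v,i,p}$ to be unpredictable. That requires a BDH-type assumption and a simulation that answers the shared-$s_v$ authentication and signature queries. Because $K_{v,i,p}$ is static across sessions (only $N,\tau$ vary), I must also argue that leakage of other session keys does not help. Here the random oracle does the work: distinct $(N,\tau)$ give independent outputs.
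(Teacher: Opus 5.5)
Your argument is sound, but it takes a genuinely different and more complete route than the paper's. The paper's sketch treats key secrecy as essentially given. It asserts that, by the authentication and freshness guarantees of Theorem~\ref{app:thm:auth}, the adversary cannot obtain or reuse the session key of another session or identity context. It then observes that $D_{v,i}$ contains both identities, $v$, $c$, $(N_{v,i},\tau_{v,i})$ and $C_v$. Hence any modified, redirected, or replayed context is a fresh MAC input, and an accepted tag is a MAC forgery.

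You instead prove the hypothesis the paper assumes. You replace $K_{\mathrm{sess},v,i}$ by a random key via the random oracle $H_3$ and a BDH reduction on the Sakai--Ohgishi--Kasahara-type value $K_{v,i,p}=\hat e(Q_{v,i},Q_p^{(v)})^{s_v}$. This uses Boneh--Franklin-style extraction for non-target identities. It also uses Cha--Cheon simulation ($V=yP_{\mathrm{pub},v}$, $U=yg-hQ$, program $H_2$) for the authentication proofs and provider signatures that reuse $s_v$. Only then do you invoke EUF-CMA of the MAC, with a hybrid over sessions.

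Your route buys three things:
\begin{itemize}
\item It identifies that secrecy of the static pairing value rests on BDH rather than on authentication soundness. Theorem~\ref{app:thm:auth} concerns forging $(U,V)$, not computing $K_{v,i,p}$, so the listed assumptions ``$H_3$ and the MAC'' do not by themselves give an end-to-end guarantee.
\item It handles the fact that $K_{v,i,p}$ is shared by all sessions of the pair, and the cross-protocol reuse of $s_v$.
\item It gives a concrete bound, at the price of the random-oracle model for $H_1,H_2,H_3$ and the $q_{H_1}^2q_{H_3}$ loss.
\end{itemize}
The paper's version proves exactly the literal statement, where non-possession of the key is a hypothesis, with a one-step MAC argument. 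That argument covers cross-session reuse even without key independence, because the fresh $(N,\tau)$ fields change the MAC input.

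Two small points to tighten. First, your $G_1$ should replace the keys of all sessions of the target pair at once. These keys should be kept consistent between provider and device by a table indexed by $(N,\tau)$; your BDH hop already supports this, since all these keys share $K_{v,i,p}$. Second, the Cha--Cheon step is a simulation of key reuse, not an instance of key separation.
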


\begin{proof}[Proof sketch]
The session key is derived from the authenticated shared pairing value
$K_{v,i,p}$, the identities of the target device and provider, and the
fresh session values $N_{v,i}$ and $\tau_{v,i}$. Thus, under the
authentication and freshness guarantees of Theorem~\ref{app:thm:auth}, an adversary
cannot obtain or reuse the session key associated with a different
authenticated session or identity context except with negligible
probability.

The tag input $D_{v,i}$ additionally includes the provider identity,
target-device identity, vendor domain, capability class, fresh session
values, and aggregate ciphertext. Consequently, modifying any of these
fields changes the MAC input. Producing a valid tag for the modified
context, or reusing a tag from another session or deployment context,
would therefore constitute a successful forgery against the underlying
MAC. Hence, a modified, redirected, or cross-session aggregate ciphertext
is rejected except with negligible probability.
\end{proof}

\subsection{Bitstream Integrity and Device Binding}

For each protected architecture-specific variant
$\mathrm{Ct}^{(k)}$ intended for $\mathrm{FPGA}_{v,i}$, the registered
provider generates an identity-bound proof using its vendor-scoped identity
private key $\mathrm{SK}_{\mathrm{ID}_p}^{(v)}$. The provider first samples
fresh randomness
$r' \xleftarrow{R}\mathbb{Z}_p$ and computes

\begin{equation}
U_{\sigma}^{(k)}
=
r' Q_p^{(v)},
\label{app:eq:supp-sig-commitment}
\end{equation}

where
$Q_p^{(v)}=H_1(ID_p^{(v)})$ is the provider's identity point.

The Fiat--Shamir challenge binds the provider identity, target-device
identity, encrypted payload, release metadata, and the proof commitment:

\begin{equation}
h_{\sigma}^{(k)}
=
H_2
\left(
ID_p^{(v)}
\parallel
ID_{v,i}
\parallel
\mathrm{Ct}^{(k)}
\parallel
\mathrm{Hdr}^{(k)}
\parallel
U_{\sigma}^{(k)}
\right).
\label{app:eq:supp-sig-challenge}
\end{equation}

The provider computes the response

\begin{equation}
V_{\sigma}^{(k)}
=
\left(
r' + h_{\sigma}^{(k)}
\right)
\mathrm{SK}_{\mathrm{ID}_p}^{(v)}.
\label{app:eq:supp-sig-response}
\end{equation}

The resulting identity-bound signature is

\begin{equation}
\sigma_{p,v,i}^{(k)}
=
\left(
U_{\sigma}^{(k)},
V_{\sigma}^{(k)}
\right).
\label{app:eq:supp-sig}
\end{equation}

At the target FPGA, the device recomputes
$Q_p^{(v)}=H_1(ID_p^{(v)})$ and
$h_{\sigma}^{(k)}$ and accepts the signature only if

\begin{equation}
\hat e
\left(
g,V_{\sigma}^{(k)}
\right)
\stackrel{?}{=}
\hat e
\left(
P_{\mathrm{pub},v},
U_{\sigma}^{(k)}
+
h_{\sigma}^{(k)}Q_p^{(v)}
\right).
\label{app:eq:supp-sig-verify}
\end{equation}

For an honestly generated signature,
$\mathrm{SK}_{\mathrm{ID}_p}^{(v)}=s_vQ_p^{(v)}$ and
$P_{\mathrm{pub},v}=s_vg$. Therefore, by bilinearity,

\begin{align}
\hat e(g,V_{\sigma}^{(k)})
&=
\hat e
\left(
g,
(r'+h_{\sigma}^{(k)})\mathrm{SK}_{\mathrm{ID}_p}^{(v)}
\right)
\\
&=
\hat e
\left(
P_{\mathrm{pub},v},
(r'+h_{\sigma}^{(k)})Q_p^{(v)}
\right)
\\
&=
\hat e
\left(
P_{\mathrm{pub},v},
U_{\sigma}^{(k)}
+
h_{\sigma}^{(k)}Q_p^{(v)}
\right),
\end{align}

since
$U_{\sigma}^{(k)}=r'Q_p^{(v)}$. Hence, an honestly generated
signature satisfies Eq.~\eqref{app:eq:supp-sig-verify}.

\begin{theorem}[Bitstream Integrity and Device Binding]
\label{app:thm:integrity}
Assuming unforgeability of the identity-bound provider proof and
authenticity of AES-GCM, an adversary cannot modify, substitute, or
redirect a protected bitstream to another target device while causing it
to be accepted, except with negligible probability.
\end{theorem}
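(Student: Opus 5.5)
The argument is a case analysis over the ways an adversary could get a deployment artifact accepted. Each case reduces to forging the identity-bound provider proof or breaking AES-GCM authenticity.

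Fix a target device $\mathrm{FPGA}_{v,i}$ that accepts a tuple $(\mathrm{Ct}^{*},\mathrm{Tag}^{*},\mathrm{Hdr}^{*},\sigma^{*}=(U^{*}_\sigma,V^{*}_\sigma))$. Acceptance requires both checks to pass:
\begin{itemize}
\item Eq.~\eqref{app:eq:supp-sig-verify} holds with $h^{*}_\sigma=H_2(ID_p^{(v)}\parallel ID_{v,i}\parallel \mathrm{Ct}^{*}\parallel \mathrm{Hdr}^{*}\parallel U^{*}_\sigma)$, computed from the device's own identity;
\item GCM decryption under $\widehat{K}_{\mathrm{AES}}$ accepts $\mathrm{Tag}^{*}$.
\end{itemize}
Let $\mathcal{Q}$ be the set of tuples $(ID_{v,i'},\mathrm{Ct},\mathrm{Hdr},U_\sigma)$ that the honest provider actually signed. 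I split into two events:
\begin{itemize}
\item \textbf{E1:} the accepted tuple $(ID_{v,i},\mathrm{Ct}^{*},\mathrm{Hdr}^{*},U^{*}_\sigma)\notin\mathcal{Q}$;
\item \textbf{E2:} the tuple lies in $\mathcal{Q}$, but the $(\mathrm{Ct},\mathrm{Tag})$ pair was not produced by the honest AES-GCM encryption.
\end{itemize}
Modification and substitution of $\mathrm{Ct}^{(k)}$ or $\mathrm{Hdr}^{(k)}$ fall into E1. So does redirection of a signature issued for $ID_{v,i'}$, $i'\neq i$, to device $i$, because $ID_{v,i}$ enters the hash input. Tag tampering with an unchanged ciphertext falls into E2.

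For E1, I build a reduction $\mathcal{B}$ in the random-oracle model that breaks the unforgeability of the pairing-based Fiat--Shamir identity proof. $\mathcal{B}$ programs $H_1$ and $H_2$ and simulates provider signatures without $\mathrm{SK}_{\mathrm{ID}_p}^{(v)}$: it picks $h,z$, sets $V_\sigma=z\,\mathrm{SK}$-free via the standard HVZK trick (choose $V_\sigma=\beta P_{\mathrm{pub},v}$-compatible values by programming $Q_p^{(v)}=aG$ for known $a$ in the unrelated queries, and embedding the challenge only in the target identity). I then apply the forking lemma to two accepting forgeries with the same $U^{*}_\sigma$ and distinct challenges $h\neq h'$. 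Since
\[
\hat e(g,V-V')=\hat e(P_{\mathrm{pub},v},(h-h')Q_p^{(v)}),
\]
we get $(h-h')^{-1}(V-V')=s_vQ_p^{(v)}$. This recovers the provider's identity key, which is a CDH instance when $P_{\mathrm{pub},v}=s_vg$ and $Q_p^{(v)}$ are embedded from the challenge. Device identity keys $s_vQ_{v,i}$, which the adversary may obtain by compromising devices, are simulatable in this setup because $H_1$ answers on the disjoint \texttt{DEV:} namespace are programmed with known discrete logs. Proofs from Theorem~\ref{app:thm:auth}, which use the same key shape, are simulated in the same way. Domain separation of the \texttt{PROV:} namespace ensures that no device transcript is a valid provider signature.

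For E2, the signature pins the honest $\mathrm{Ct}^{(k)}$ and $\mathrm{Hdr}^{(k)}$. Any accepting $(\mathrm{Ct}^{(k)},\mathrm{Tag}^{*})$ with $\mathrm{Tag}^{*}\neq\mathrm{Tag}^{(k)}$, or a GCM-accepting pair under a different header, is an INT-CTXT forgery against AES-256-GCM under $K_{\mathrm{AES}}$. This requires $K_{\mathrm{AES}}$ to be pseudorandom to the adversary. I obtain that from Theorem~\ref{app:thm:confidentiality} together with SHA-256 modelled as a random oracle, via a hybrid that replaces $K_{\mathrm{AES}}$ by a uniform key. I rely on unique IVs, as already assumed. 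Summing the two events gives
\[
\Pr[\text{accept}]\le \mathrm{Adv}^{\mathrm{EUF}}_{\mathrm{IBS}}+\mathrm{Adv}^{\mathrm{INT\text{-}CTXT}}_{\mathrm{GCM}}+\mathrm{Adv}^{\mathrm{BDHE}}+\mathrm{negl}(\lambda).
\]

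The main obstacle is the E1 simulation. The scheme differs from Cha--Cheon in that the commitment $U=r'Q_p^{(v)}$ is a multiple of the identity point rather than of $g$. Two things must be checked: that signing queries are simulatable by programming $H_2$ at the chosen $U$, and that the forking extraction genuinely yields $s_vQ_p^{(v)}$ and not something trivially computable.

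I would first check whether $V$ is computable from public data. Because $U$ is adversary-chosen, a forger could set $U=-h\,Q_p^{(v)}$ only after knowing $h$, which is infeasible since $h$ depends on $U$. Any $U$ written with respect to $g$ rather than $Q_p^{(v)}$ must also be handled. If simulation fails, the fallback is to reduce directly to the security of Hess's identity-based signature and restate the argument for that scheme.
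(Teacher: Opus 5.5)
Your E1/E2 split is the paper's own argument made explicit. The paper likewise notes that redirecting a signature, or altering $\mathrm{Ct}^{(k)}$ or $\mathrm{Hdr}^{(k)}$, changes $h_\sigma^{(k)}$ and so contradicts the assumed unforgeability, and that the AES-GCM tag catches the rest. Since unforgeability is a \emph{hypothesis} of the theorem, E1 is finished once you exhibit it as a forgery. The CDH layer you build underneath is not needed, and the ``main obstacle'' you locate there is misdiagnosed. The construction does not differ from Cha--Cheon; it \emph{is} Cha--Cheon: commitment $rQ_{\mathrm{ID}}$, challenge $H(m,U)$, response $(r+h)s_vQ_{\mathrm{ID}}$, and the same verification equation. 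So its random-oracle proof under CDH applies essentially verbatim, and the Hess fallback never arises. The step you leave open has the standard answer. It matters precisely for the target provider, whose signatures the adversary sees in every deployment: pick $z,h$, set $U_\sigma=zg-hQ_p^{(v)}$ (uniform, exactly like $r'Q_p^{(v)}$) and $V_\sigma=zP_{\mathrm{pub},v}$, and program $H_2$ at that input. No discrete log of $Q_p^{(v)}$ is needed. Your parenthetical conflates this with key-extraction simulation via $H_1(\cdot)=ag$, which only serves non-target identities.

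What does need tightening is the event split.

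(i) Because $U^*_\sigma$ is part of the tuples in $\mathcal{Q}$, E1 also contains a message $(ID_{v,i},\mathrm{Ct},\mathrm{Hdr})$ that the provider did sign, presented with a different commitment. That is a strong-unforgeability event, not an EUF forgery. Define E1 on the message alone and let E2 absorb the leftover case, which is not an attack on content.

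(ii) The signature carries no freshness. An artifact the provider signed for the same device in an earlier deployment therefore lies in $\mathcal{Q}$ and was honestly GCM-encrypted, so under your wording it can fall in neither event. Let \emph{honest} mean ``produced in the current deployment under the current $K_{\mathrm{AES}}$''. Such replays then land in E2 and are INT-CTXT forgeries under the new key.

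(iii) Getting a pseudorandom $K_{\mathrm{AES}}$ from Theorem~\ref{app:thm:confidentiality} requires that no authorized device is compromised. Since $K_{\mathrm{AES}}$ is common to all vendor domains, this means no authorized device in any domain, which the threat model otherwise permits. The paper hides the same condition inside its AES-GCM hypothesis; you should state it explicitly.

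Note also that a tag-only change to a signed $(\mathrm{Ct},\mathrm{Hdr})$ fails with probability one, because the GCM tag is a deterministic function of $(K,IV,\mathrm{Hdr},\mathrm{Ct})$. INT-CTXT is genuinely needed only for what the signature does not pin: the IV, and the key under which the artifact was produced.
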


\begin{proof}[Proof sketch]
The challenge in Eq.~\eqref{app:eq:supp-sig-challenge} cryptographically binds
the provider identity, target-device identity, encrypted payload,
release metadata, and proof commitment. Therefore, redirecting a valid
signature to another target device changes the challenge through
$ID_{v,i}$ and invalidates the verification equation. Similarly,
modifying or substituting $\mathrm{Ct}^{(k)}$ or
$\mathrm{Hdr}^{(k)}$ changes $h_{\sigma}^{(k)}$ and requires the adversary
to produce a valid response without the corresponding provider identity
private key, contradicting the assumed unforgeability of the
identity-bound provider proof.

An adversary that attempts to modify the encrypted bitstream payload
without generating a corresponding valid provider proof is likewise
rejected by signature verification. Even if a modified ciphertext is
presented with a syntactically valid signature component, the
AES-GCM authentication tag detects any unauthorized modification of the
encrypted payload, except with negligible probability under the
authenticity of AES-GCM. Hence, a protected bitstream is accepted only
when the provider identity, intended target device, encrypted payload,
and release metadata jointly satisfy the identity-bound verification and
authenticated-encryption checks. These checks are performed before
bitstream decryption and FPGA configuration.
\end{proof}

\subsection{Dynamic-Membership Security}
\label{app:sec:membership}

For a vendor-specific authorized set $S_{c,v}$, membership changes update
only the corresponding aggregate-authorization component:
\begin{equation}
K_{S_{c,v}\cup\{i'\},\,v}
=
K_{S_{c,v},\,v}
+
\Delta^{\mathrm{add}}_{v,i'},
\qquad
K_{S_{c,v}\setminus\{i^*\},\,v}
=
K_{S_{c,v},\,v}
-
\Delta^{\mathrm{rev}}_{v,i^*}.
\label{app:eq:membership-update}
\end{equation}

Let $K_{S_{c,v},v}^{(e)}$ denote the aggregate-authorization state
associated with membership epoch $e$. A membership change creates a new
epoch, and subsequently generated ciphertexts use the updated
authorization state. Previously generated ciphertexts remain associated
with their original membership epoch.

\begin{theorem}[Forward and Backward Membership Security]
\label{app:thm:membership}
A revoked device cannot decrypt ciphertexts generated after its revocation,
while a newly enrolled device cannot decrypt ciphertexts generated before
its enrollment, except with negligible probability under the underlying
aggregate-encryption security assumption.
\end{theorem}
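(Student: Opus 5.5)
The plan is to reduce both directions of the statement to Theorem~\ref{app:thm:confidentiality} (Deployment-Secret Confidentiality). The key observation is that, within a vendor domain, the device keys $d_{v,i}=\gamma_v g_{v,i}$ and all public parameters are epoch-independent. Only the aggregate component $K^{(e)}_{S_{c,v},v}$ embedded in $c_2$ changes under Eq.~\eqref{app:eq:membership-update}. Consequently, a ciphertext generated in epoch $e$ is exactly an aggregate ciphertext for the set $S^{(e)}_{c,v}$ in the sense of the underlying construction. Membership security then becomes a statement about a device index $i\notin S^{(e)}_{c,v}$ facing a ciphertext for $S^{(e)}_{c,v}$.

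\textbf{Step 1: forward security (revocation).} Let $i^\ast$ be revoked at the transition $e\to e+1$, so $S^{(e+1)}_{c,v}=S^{(e)}_{c,v}\setminus\{i^\ast\}$ and $i^\ast\notin S^{(e+1)}_{c,v}$. The revoked device's knowledge consists of $d_{v,i^\ast}$, the public elements $g_{v,k}$, and all earlier-epoch ciphertexts together with their plaintexts. The earlier plaintexts are independent, since each deployment samples a fresh $m_G$ and fresh $t$, so they can be simulated by the reduction. What remains is an adversary holding the key of an index outside the target set. This is precisely the unauthorized, colluding-device adversary covered by Theorem~\ref{app:thm:confidentiality}.

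\textbf{Step 2: backward security (enrollment).} Let $i'$ be added at $e\to e+1$. Then every ciphertext from an epoch $e''\le e$ was formed with $K^{(e'')}_{S_{c,v},v}$, where $i'\notin S^{(e'')}_{c,v}$. Because ciphertexts are never re-encrypted, the later update to $K$ does not change $c_2$ of any old ciphertext. The same reduction therefore applies, now with $i'$ acting as the corrupted index. In both directions, the simulator $\mathcal{B}$ guesses the target epoch among the polynomially many epochs, which costs a polynomial factor $q_e$. It embeds the BDHE instance in $\{g_{v,k}\}$ and in the challenge $(c_1,c_2,c_3)$ for the target set. It answers device-key queries for indices outside that set, and generates all other epochs' ciphertexts honestly from public data. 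This yields
\[
\mathrm{Adv}^{\mathrm{mem}}(\mathcal{A})\le q_e\cdot\mathrm{Adv}^{\mathrm{BDHE}}(\mathcal{B})+\mathrm{negl}(\lambda).
\]
Finally, $K_{\mathrm{AES}}$ is SHA256 of $m_G$, so indistinguishability of $m_G$ carries over to the derived AES key.

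\textbf{Main obstacle.} The hardest step is justifying that the simulator can answer key queries for \emph{all} out-of-set indices, including the revoked or newly added device. The difficulty is that $g_{v,n_v+1}$ is the missing BDHE element and the decryption equation Eq.~\eqref{app:eq:supp-aggregate-decrypt} involves $b_{i,S}$ terms. I would first reuse the AgEID/BGW-style selective-set argument: the target set is fixed at the guessed epoch, and $\gamma_v$ is programmed relative to it so that $d_{v,j}$ is computable for $j\notin S$. I would also argue that an adversary who knows earlier plaintexts gains nothing, since successive epochs share $\gamma_v$ but use fresh $t$. If the selective-set assumption is unavailable, I would fall back to an explicitly selective-epoch statement.
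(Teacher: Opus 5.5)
Your proposal is correct and follows the same route as the paper's sketch: in each direction the device in question lies outside the authorized set of the relevant epoch (revocation removes its term from $K^{(e+1)}_{S_{c,v},v}$, and pre-enrollment ciphertexts are never re-encrypted), so the underlying aggregate-encryption confidentiality applies. You go beyond the paper by making the reduction explicit (public simulation of other epochs' ciphertexts and plaintexts, BGW-style programming of $\gamma_v$ so that keys outside the target set are computable), and your caveat that this yields only a selective, fixed-schedule guarantee is apt: guessing the epoch alone would not fix the target set under adaptive membership changes, and the paper's sketch silently inherits the same restriction.
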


\begin{proof}[Proof sketch]
Revocation removes the corresponding device contribution from the
aggregate-authorization structure used for subsequent membership epochs
(Eq.~\eqref{app:eq:membership-update}). Thus, a revoked device lacks the
authorization relation required to recover the group-valued deployment
secret $m_G$ protected under a subsequently generated ciphertext.
Conversely, a newly enrolled device is absent from the authorization
structures associated with earlier epochs and therefore cannot use its
newly issued credentials to recover the deployment secrets protected under
earlier ciphertexts.

The update itself requires only a constant number of group operations.
Provisioning and identity-key generation for a newly enrolled device are
separate enrollment operations. These guarantees apply to ciphertexts
generated under the corresponding membership epochs and do not revoke
plaintext, deployment secrets, or bitstreams that an authorized device
legitimately obtained before its revocation.
\end{proof}

\subsection{Multi-Vendor, Multi-Provider, and CSP Isolation}
\label{app:sec:isolation}

\begin{theorem}[Domain Isolation and CSP Security]
\label{app:thm:isolation}
Under the SAEID key-separation model, compromise of one vendor or provider
domain does not directly provide the cryptographic capability to decrypt or
forge information belonging to another independent domain. A CSP-only
adversary likewise cannot recover protected AES keys, forge an authenticated
deployment session, or cause a modified bitstream to be accepted, except
with negligible probability.
\end{theorem}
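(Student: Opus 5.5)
The plan is a hybrid, case-based reduction. Each violation of the statement will be mapped to one of the earlier results: Theorem~\ref{app:thm:confidentiality}, Theorem~\ref{app:thm:auth}, Theorem~\ref{app:thm:session-binding}, or Theorem~\ref{app:thm:integrity}. The starting point is the key-separation model. For vendor $v$ the secrets are $\gamma_v$ (aggregate domain) and $s_v$ (identity domain), and the derived material is $d_{v,i}$, $\mathrm{SK}_{\mathrm{ID}_{v,i}}$ and $\mathrm{SK}_{\mathrm{ID}_p}^{(v)}$. For each provider deployment the secrets are $m_G$ (hence $K_{\mathrm{AES}}$) and $K_{\mathrm{sess},v,i}$. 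All of these are sampled independently per domain and per deployment. I would first prove a simple \emph{simulation lemma}: given only the secrets of a compromised domain $v'$ (or of a compromised provider deployment), a reduction can generate everything else itself. That includes all public parameters $\mathit{param}_v$, $PK_v$, $P_{\mathrm{pub},v}$, and the ciphertexts and keys of every other independent domain $v\neq v'$ or other deployment. The reason is that no secret of $v$ is a function of secrets of $v'$: the elements $g_{v,k}$, $\gamma_v$, $s_v$ use fresh exponents, and the identity namespaces $\texttt{DEV:}\parallel\mathrm{VendorID}_v$ and $\texttt{PROV:}\parallel\mathrm{VendorID}_v$ are disjoint across vendors. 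Hence $H_1$ queries for domain $v$ are never answered by material from $v'$ in the random-oracle model. So compromise of $v'$ gives the adversary nothing beyond what it could simulate, and the first claim reduces to the single-domain security of $v$.

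The single-domain case then splits into three events. \textbf{Decryption:} the adversary recovers $m_G$ (or distinguishes $K_{\mathrm{AES}}$) for a target $C_v$. Here the reduction plays the BDHE game for $v$, embedding the challenge as in Theorem~\ref{app:thm:confidentiality} and simulating the compromised domain's keys as above. Since $K_{\mathrm{AES}}=\mathrm{SHA256}(\operatorname{Serialize}_{G_T}(m_G))$ and $m_G$ is fresh per deployment, a cross-provider break also reduces to this game. \textbf{Session forgery:} the adversary produces a $T_{v,i}$ accepted by the device. This reduces to Theorem~\ref{app:thm:session-binding}, after noting that $K_{v,i,p}$ requires $s_v$-derived keys of $v$ itself, which the simulation lemma shows are unavailable. \textbf{Modified bitstream acceptance:} this reduces to Theorem~\ref{app:thm:integrity}.

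For the CSP-only adversary I would observe that the CSP's view is a subset of the public parameters, the relayed $(N,\tau,U,V)$, $C_v$, $T_{v,i}$, $\mathrm{Ct}^{(k)}$, $\mathrm{Tag}_{\mathrm{GCM}}^{(k)}$ and $\sigma_{p,v,i}^{(k)}$, together with the power to reorder, replay or inject messages. This is the network adversary of Section~\ref{sec:threat} with no compromised domain. The three events above therefore apply directly, and replay is handled by the freshness clause of Theorem~\ref{app:thm:auth}.

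A union bound over the events and over the polynomially many domains and sessions gives
\[
\operatorname{Adv}(\mathcal{A})\le q_V\operatorname{Adv}^{\mathrm{BDHE}}+q_S\operatorname{Adv}^{\mathrm{auth}}+q_S\operatorname{Adv}^{\mathrm{MAC}}+q_\mu\bigl(\operatorname{Adv}^{\mathrm{sig}}+\operatorname{Adv}^{\mathrm{GCM}}\bigr)+\operatorname{negl}(\lambda),
\]
where $q_V$, $q_S$, $q_\mu$ are respectively the numbers of vendor domains, sessions and variant deliveries. The reduction guesses the target domain or session to lose only these polynomial factors.

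The main obstacle I expect is the simulation lemma in the identity domain. The shared pairing value $K_{v,i,p}$ is a non-interactive Sakai--Ohgishi--Kasahara-style key, and provider signatures and device proofs share $s_v$ and $H_2$. A reduction embedding a challenge into $P_{\mathrm{pub},v}$ must still answer signing and authentication queries for other identities of the same vendor without $s_v$. I would first program $H_1$ in the usual Boneh--Franklin way (answering $Q=xg$ for non-target identities so that $\mathrm{SK}=xP_{\mathrm{pub},v}$ is computable). I would also domain-separate $H_2$ inputs by role prefix, so that a device transcript can never be reinterpreted as a provider signature, and only then argue that the separate-namespace assumption suffices.
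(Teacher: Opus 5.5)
Your overall structure is the paper's: independence of the per-vendor secrets, then reduction of each violation to the confidentiality, authentication, session-binding and integrity theorems, plus a union bound. However, your simulation lemma rests on a false description of the construction. You assert that $m_G$ (hence $K_{\mathrm{AES}}$) is ``sampled independently per domain''. It is not. Algorithm~\ref{alg:aggregate_execution} samples one $m_G$ per deployment and encrypts that same value under every participating vendor, one $C_v$ for each $v\in\mathcal{V}_{S_c}$. A single $K_{\mathrm{AES}}$ then protects all $\mu$ variants.

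So in a deployment spanning $v$ and $v'$, compromise of $v'$ recovers $m_G$. Its secret $\gamma_{v'}$ suffices, and so does a single device key $d_{v',j}$ with $j\in I_{c,v'}$: either decrypts $C_{v'}$, and $m_G$ then decrypts the variant intended for $v$'s devices. Your ``Decryption'' reduction fails at exactly this point. To simulate the compromised domain, it must give the adversary a $C_{v'}$ that encrypts the challenge plaintext under keys the adversary holds. The adversary then wins equally in the real and random BDHE worlds.

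The first claim therefore cannot be reduced to single-domain security of $v$ unless ``independent'' excludes two cases: deployments whose $\mathcal{V}_{S_c}$ contains a compromised vendor, and deployments whose authorized sets contain a compromised device. The paper's proof claims only non-interchangeability of authorization material. Since $\alpha_v,\gamma_v,s_v$ are independent of $\alpha_{v'},\gamma_{v'},s_{v'}$, material such as $d_{v',j}$, $K_{S_{c,v'},v'}$, $\mathrm{SK}_{\mathrm{ID}_{v',j}}$ or $\mathrm{SK}_{\mathrm{ID}_p}^{(v')}$ cannot satisfy $v$'s decryption or verification relations. State your lemma in that form, or with the restriction above, and your simulation argument goes through.

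The same sharing undermines your CSP-only decryption case. The adversary sees all $|\mathcal{V}_{S_c}|$ ciphertexts of one $m_G$, so guessing a single target domain does not help: wherever the BDHE challenge sits, the other ciphertexts still encrypt $m_G$. You need a hybrid that replaces the masks in $c_3$ of the participating domains, one at a time, by random elements of $G_T$. In each step the simulator chooses $m_G$ itself and generates every other domain honestly. The loss is the number of vendor ciphertexts per deployment times the number of deployments, still polynomial, so your bound survives with $q_V$ read this way. This hybrid must come after the session-binding game hop. Only then does the device decrypt just provider-authenticated $C_v$, so that CPA security suffices against an active CSP.

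Likewise, the authentication theorem's freshness clause protects the provider as verifier. It does not stop the CSP from replaying a complete old bundle $(N,\tau,C_v,T_{v,i})$ to a device; that needs the device-side freshness state. On the positive side, you are right that secrecy of the static pairing value $K_{v,i,p}$ requires a BDH-type argument with Boneh--Franklin programming of $H_1$. The paper does not supply this and simply cites the authentication theorem.
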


\begin{proof}[Proof sketch]
Each vendor maintains separate aggregate-encryption and identity
cryptographic domains, based on independent secrets $\gamma_v$ and $s_v$,
respectively. Therefore, authorization material from one vendor domain is
not interchangeable with that of another vendor. Provider credentials are
similarly scoped to their registered identity domains and cannot be used
to authenticate another provider.

A CSP observes ciphertexts, deployment metadata, and protocol messages but
does not possess vendor master secrets, identity-authority secrets, or
device private keys. AES-key confidentiality follows from
Theorem~\ref{app:thm:confidentiality}, device authentication and
authentication-transcript replay resistance from
Theorem~\ref{app:thm:auth}, authenticated session binding from
Theorem~\ref{app:thm:session-binding}, and bitstream integrity and device
binding from Theorem~\ref{app:thm:integrity}. Hence, successful cross-domain
decryption, device or provider impersonation, session-binding forgery, or
unauthorized bitstream modification would require compromise of protected
domain secrets or a break of the corresponding cryptographic primitive.
\end{proof}

\subsection{Composition of Security Properties}
\label{app:sec:composition}

The SAEID security properties rely on distinct cryptographic mechanisms and
their corresponding security assumptions. Aggregate encryption protects
the deployment AES key $m$ under the vendor-specific aggregate-encryption
secret $\gamma_v$, while device authentication and identity-bound provider
signatures use the vendor identity secret $s_v$. The session-binding key
$K_{\mathrm{sess},v,i}$ is derived from the authenticated device--provider
exchange and is used only to authenticate the deployment context. The
architecture-specific bitstream variants are subsequently protected using
AES-256-GCM under $m$.

These mechanisms provide complementary security guarantees. The
aggregate-encryption layer establishes confidentiality of $m$ and
authorization for the selected device set. The identity-based
authentication layer establishes the identity of the participating device,
while the identity-bound provider proof binds the protected bitstream to
the registered provider and intended target device. The session-binding
MAC additionally binds the aggregate ciphertext to the authenticated
deployment context and fresh session values. AES-GCM provides
authenticated protection of the encrypted bitstream payload.

Under the stated key-separation, freshness, aggregate-encryption, identity
proof, MAC, and AES-GCM security assumptions, an end-to-end security
violation must result from the failure of at least one of the corresponding
underlying mechanisms. Thus, the overall failure probability can be
bounded by the union of the failure probabilities of the individual
security mechanisms. The composition does not require the CSP to be
trusted, since the CSP does not possess the vendor master secrets,
identity private keys, provider credentials, session keys, or deployment
AES key.

The individual security arguments established in this appendix therefore
cover the principal security properties of the SAEID deployment workflow:
AES-key confidentiality, device authentication and replay resistance,
authenticated session binding, bitstream integrity and device binding,
dynamic-membership security, and multi-vendor, multi-provider, and CSP
isolation.

\subsection{Summary of Security Guarantees}
\label{app:sec:summary}

Table~\ref{app:tab:summary} summarizes the principal security properties
established by the arguments above.

\begin{table}[h]
\caption{Summary of SAEID security guarantees}
\label{app:tab:summary}
\small
\begin{tabularx}{\linewidth}{@{}lX@{}}
\toprule
\textbf{Property} & \textbf{Basis} \\
\midrule
AES-key confidentiality
& Underlying BDHE-based aggregate-encryption security \\

Authentication soundness
& Identity-based proof soundness and Fiat--Shamir security \\

Authentication-transcript replay resistance
& Fresh nonce and timestamp in the authentication challenge \\

Authenticated session binding
& Session-key derivation, fresh session context, and MAC security \\

Bitstream integrity
& Identity-bound provider proof unforgeability and AES-GCM authenticity \\

Device binding
& Target-device identity included in the identity-bound proof challenge \\

Dynamic-membership security
& Updated aggregate authorization and aggregate-encryption security \\

Multi-vendor/provider isolation
& Separate vendor aggregate-encryption and identity cryptographic domains \\

CSP resistance
& Composition of confidentiality, authentication, session binding, and integrity \\
\bottomrule
\end{tabularx}
\end{table}

\end{document}